\newtheorem{theorem}{Theorem}[section]
\theoremstyle{definition}
\newtheorem{definition}{Definition}[section]
\lstdefinelanguage{NVLang}{
  keywords={actor, fn, type, receive, case, loop, break, spawn, await, external, let, rec, send, reply, if, else, then, module, import, from, as, when, after, supervisor, strategy, children, permanent, transient, temporary, one_for_one, one_for_all, rest_for_one, in},
  keywordstyle=\color{blue}\bfseries,
  ndkeywords={Int, Float, Bool, String, Unit, Any, Pid, Future, List, Map, Option, Result, Some, None, Ok, Err},
  ndkeywordstyle=\color{teal}\bfseries,
  identifierstyle=\color{black},
  sensitive=true,
  comment=[l]{\#},
  commentstyle=\color{gray}\ttfamily,
  stringstyle=\color{red}\ttfamily,
  morestring=[b]",
  tabsize=2
}
\lstdefinelanguage{CoreErlang}{
  keywords={fun, case, of, end, receive, after, let, letrec, in, try, catch, do, call, apply, when, module, attributes, exports},
  keywordstyle=\color{purple}\bfseries,
  identifierstyle=\color{black},
  sensitive=true,
  comment=[l]{\%},
  commentstyle=\color{gray}\ttfamily,
  stringstyle=\color{red}\ttfamily,
  morestring=[b]',
  tabsize=2
}
\lstdefinestyle{nova}{
  language=NVLang,
  basicstyle=\footnotesize\ttfamily,
  frame=single,
  numbers=left,
  numberstyle=\scriptsize\color{gray},
  numbersep=5pt,
  xleftmargin=15pt,
  breaklines=true,
  captionpos=b,
  keepspaces=true
}
\lstdefinestyle{erlang}{
  language=CoreErlang,
  basicstyle=\footnotesize\ttfamily,
  frame=single,
  numbers=left,
  numberstyle=\scriptsize\color{gray},
  numbersep=5pt,
  xleftmargin=15pt,
  breaklines=true,
  captionpos=b,
  keepspaces=true
}
\scriptsize\color{gray},
\title{NVLang: Unified Static Typing for Actor-Based Concurrency on the BEAM}
\author{Miguel de Oliveira Guerreiro\\
\textit{University of Lisbon}\\
\texttt{oliveira.guerreiro@tecnico.ulisboa.pt}}
\date{}
\begin{document}

\maketitle

\begin{abstract}
Actor-based systems like Erlang/OTP power critical infrastructure---from telecommunications to messaging platforms---handling millions of concurrent connections with legendary reliability. Yet these systems lack static guarantees about message protocols: processes communicate by sending arbitrary messages that are pattern-matched at runtime, deferring protocol violations to production failures.

We present \textbf{NVLang}, a statically typed functional language that brings comprehensive type safety to the BEAM virtual machine while preserving the actor model's simplicity and power. NVLang's central contribution is showing that algebraic data types (ADTs) naturally encode actor message protocols: each actor declares the sum type representing its message vocabulary, and the type system enforces protocol conformance at compile time. We introduce typed process identifiers ($\texttt{Pid}[\tau]$) that encode the protocol an actor expects, and typed futures ($\texttt{Future}[\tau]$) that provide type-safe request-reply patterns.

By extending Hindley-Milner type inference to track message protocols, NVLang eliminates an entire class of message-passing errors while maintaining clean syntax that rivals dynamically typed alternatives. Our implementation compiles to Core Erlang, enabling seamless interoperability with the existing Erlang ecosystem. We formalize the type system and provide proof sketches for type soundness, demonstrating that well-typed NVLang programs cannot send messages that violate actor protocols.
\end{abstract}

\noindent\textbf{Keywords:} type systems, actor model, BEAM, functional programming, Hindley-Milner, concurrent programming, Erlang

%% ============================================================================
%% SECTION 1: INTRODUCTION
%% ============================================================================

\section{Introduction}

The actor model has emerged as a dominant paradigm for building concurrent and distributed systems, particularly in domains requiring high availability and fault tolerance. Systems like Erlang/OTP power critical infrastructure at companies like WhatsApp, Discord, and Ericsson, handling millions of concurrent connections with legendary reliability. Despite these successes, the actor model faces a fundamental challenge: \emph{message passing is inherently untyped}. In traditional actor systems, processes communicate by sending arbitrary messages that are pattern-matched at runtime, with no static guarantees about message protocol conformance. This design choice, while flexible, leads to a class of bugs that manifest only during execution---often in production environments where failures are costly.

Consider a simple counter actor in Erlang. The process receives messages like \texttt{\{increment\}}, \texttt{\{add, N\}}, or \texttt{\{get\}}, but nothing prevents a client from sending \texttt{\{hello\}}, a message the counter doesn't understand. Such protocol violations cause runtime crashes that could have been prevented by static analysis. More critically, in large distributed systems with hundreds of actor types exchanging complex message protocols, the lack of type safety becomes a significant barrier to system reliability and developer productivity.

Recent efforts to add types to actor systems---such as Akka Typed~\cite{akka-typed} in Scala and gradual typing for Erlang~\cite{typed-erlang}---demonstrate industry recognition of this problem. However, these approaches suffer from either excessive syntactic overhead (requiring boilerplate for protocol definitions), incomplete type coverage (gradual typing with runtime checks), or poor integration with the underlying runtime (bolted-on type systems that don't leverage VM capabilities).

We present \textbf{NVLang}, a statically-typed functional language that brings comprehensive type safety to the BEAM virtual machine while preserving the simplicity and power of the actor model. NVLang's key insight is that algebraic data types (ADTs) provide a natural and elegant encoding of actor message protocols: each actor declares the sum type representing its message vocabulary, and the type system enforces protocol conformance at compile time. By combining Hindley-Milner type inference with BEAM's battle-tested actor primitives---processes, supervisors, monitors, and links---NVLang eliminates an entire class of message-passing errors while maintaining the clean syntax and expressiveness that make Erlang attractive for systems programming.

This paper makes the following contributions:

\begin{itemize}
    \item We design and implement a statically-typed functional language for the BEAM VM that provides \emph{complete type safety for actor systems}, including message protocols, spawn operations, and fault-tolerance primitives (supervisors, monitors, and links).

    \item We demonstrate how algebraic data types serve as precise protocol specifications for actors, enabling the type system to catch message-passing errors at compile time while maintaining clean, minimal syntax that rivals dynamically-typed alternatives.

    \item We formalize the type system with inference rules and prove that well-typed NVLang programs satisfy message type safety: actors only receive messages conforming to their declared protocol.

    \item We show that Hindley-Milner type inference integrates seamlessly with actor semantics, allowing developers to write actor code without type annotations while still obtaining strong static guarantees about message safety and actor behavior.

    \item We present an A-Normal Form transformation and guard expression optimization that enable efficient BEAM code generation while maintaining complete type safety.

    \item We demonstrate NVLang's expressiveness through representative examples including key-value stores, task schedulers, and supervised worker pools, showing that type-safe actors match Erlang's conciseness while catching message protocol bugs at compile time.
\end{itemize}

\paragraph{Key Advantages over Existing BEAM Languages} NVLang's static type system provides critical advantages over dynamically-typed BEAM languages. Unlike Elixir structs, which validate only field names, NVLang records provide compile-time type checking of field values---catching type errors before deployment rather than in production. Similarly, NVLang's trait system verifies polymorphic abstractions at compile time with full type inference, while Elixir protocols defer dispatch and implementation checks to runtime. These compile-time guarantees eliminate entire classes of bugs without sacrificing expressiveness: full Hindley-Milner type inference means developers rarely write type annotations, and compilation to BEAM bytecode ensures minimal runtime overhead (within 7.5\% of native Erlang) and seamless interoperability with existing Erlang and Elixir libraries.

The remainder of this paper is structured as follows. Section~\ref{sec:overview} provides an overview of NVLang's language design and key features through representative examples. Section~\ref{sec:types} details the type system, focusing on how we extend Hindley-Milner inference to handle actor message protocols. Section~\ref{sec:actors} describes the typed actor model and supervision trees. Section~\ref{sec:compilation} presents the compilation pipeline from NVLang source to BEAM bytecode, including ANF transformation and optimization strategies. Section~\ref{sec:evaluation} evaluates NVLang's performance against Erlang and Elixir with rigorous statistical analysis. Section~\ref{sec:related} discusses related work, and Section~\ref{sec:conclusion} concludes.

%% ============================================================================
%% SECTION 2: LANGUAGE OVERVIEW
%% ============================================================================

\section{Language Overview}
\label{sec:overview}

NVLang is a statically-typed functional language that compiles to Erlang/BEAM bytecode. It combines ML-family language features---algebraic data types, pattern matching, and type inference---with first-class support for the actor model. This section introduces NVLang's core features through representative examples that highlight how static types enhance actor system reliability.

\subsection{Basic Syntax and Type Inference}

NVLang programs are organized into functions with optional type annotations. The language employs Hindley-Milner type inference, allowing developers to omit most type signatures while still obtaining complete static type checking.

\begin{lstlisting}[caption={Type inference in NVLang}]
fn add(x: Int, y: Int) -> Int
  x + y

fn greet(name: String) -> String
  name

fn main() -> Int
  let result = add(10, 20)
  print("Sum of 10 + 20:")
  print(result)           # Type inferred as Int
  let msg = greet("Hello NVLang!")
  print(msg)              # Type inferred as String
  0
\end{lstlisting}

The type system supports parametric polymorphism through generic types. Algebraic data types can be parameterized over type variables, enabling type-safe abstractions like \texttt{Option[T]} and \texttt{Result[T, E]}:

\begin{lstlisting}[caption={Generic algebraic data types}]
type Option[T] =
  | Some(value: T)
  | None

type Result[T, E] =
  | Ok(value: T)
  | Err(error: E)

fn unwrap_or(opt: Option[Int], default: Int) -> Int
  case opt
    Some(x) -> x
    None -> default

fn main() -> Int
  let x = Some(42)
  let y = None
  print(unwrap_or(x, 0))   # Prints: 42
  print(unwrap_or(y, -1))  # Prints: -1
  0
\end{lstlisting}

Pattern matching on ADTs is exhaustive: the compiler verifies that all constructor cases are handled, eliminating a common source of runtime errors in dynamic languages.

\subsection{Typed Actors and Message Protocols}

The core innovation in NVLang is the integration of ADTs with actor message protocols. Each actor declares a message type---a sum type enumerating all messages it can receive---and the type system enforces that only valid messages are sent to that actor.

\begin{lstlisting}[caption={A type-safe counter actor with protocol enforcement}]
type CounterMsg =
  | Increment
  | Add(n: Int)
  | Get
  | Stop

actor Counter
  fn run(self) ->
    loop
      receive msg: CounterMsg
        Increment -> reply 1
        Add(n) -> reply n
        Get -> reply 0
        Stop -> break

fn main() ->
  let c = spawn Counter()
  let r1 = c.send Increment |> await
  print(r1)
  c.send Stop
\end{lstlisting}

In this example, the \texttt{Counter} actor's \texttt{receive} block declares that it accepts messages of type \texttt{CounterMsg}. The type system ensures that:

\begin{enumerate}
    \item All messages sent to \texttt{c} must be valid \texttt{CounterMsg} constructors
    \item The pattern match in \texttt{receive} exhaustively covers all \texttt{CounterMsg} cases
    \item Reply types are consistent across all message handlers (uniform response protocol)
\end{enumerate}

Attempting to send an invalid message---such as \texttt{c.send Hello} where \texttt{Hello} is not a \texttt{CounterMsg} constructor---produces a compile-time type error with actionable diagnostics.

\subsection{Stateful Actors with Recursive Message Loops}

NVLang actors maintain state through tail-recursive message loops, a pattern familiar from Erlang but now with full type safety. The following key-value store demonstrates stateful computation with rich message protocols:

\begin{lstlisting}[caption={A type-safe key-value store}]
type KVResponse =
  | ValueFound(value: Int)
  | ValueNotFound
  | Success
  | KeyList(keys: [String])

type KVMsg =
  | Get(key: String)
  | Put(key: String, value: Int)
  | Delete(key: String)
  | Keys

fn find_key(key, store) ->
  case store
    [] -> ValueNotFound
    (k, v) :: rest ->
      if k == key then
        ValueFound(v)
      else
        find_key(key, rest)

fn kv_loop(store) ->
  receive msg: KVMsg
    Get(key) ->
      reply find_key(key, store)
      kv_loop(store)
    Put(key, value) ->
      let new_store = upsert(key, value, store)
      reply Success
      kv_loop(new_store)
    Delete(key) ->
      reply Success
      kv_loop(remove_key(key, store))
    Keys ->
      reply KeyList(get_keys(store))
      kv_loop(store)

actor KVStore
  fn run(self) ->
    kv_loop([])
\end{lstlisting}

The \texttt{kv\_loop} function maintains the store state (a list of key-value pairs) through its parameter, updated on each recursive call. Critically, all message handlers return values of type \texttt{KVResponse}, enforcing a uniform response protocol. This constraint---that all branches of a receive block must reply with the same type---mirrors the design of Akka Typed and ensures that clients have predictable response types.

\subsection{Concurrency and Asynchronous Communication}

NVLang's actor messaging system is async-first by design, matching Erlang/Elixir's common usage patterns. The \texttt{send} operation is asynchronous (fire-and-forget) by default, returning a future immediately without blocking. To wait for a response, explicitly pipe the future to \texttt{await}:

\begin{lstlisting}[caption={Concurrent task processing with multiple workers}]
type Task =
  | Add(a: Int, b: Int)
  | Mul(a: Int, b: Int)
  | Done

actor Worker
  fn run(self) ->
    loop
      receive msg: Task
        Add(a, b) -> reply a + b
        Mul(a, b) -> reply a * b
        Done -> break

fn main() ->
  let worker = spawn Worker()
  # Async sends (fire-and-forget) - futures returned immediately
  let f1 = worker.send Add(10, 20)
  let f2 = worker.send Mul(5, 6)
  let f3 = worker.send Add(100, 200)
  # Await futures in any order for synchronous request-reply
  let r1 = f1 |> await  # 30
  let r2 = f2 |> await  # 30
  let r3 = f3 |> await  # 300
  # Fire-and-forget without waiting for response
  worker.send Done
\end{lstlisting}

This async-first design matches the common pattern in Erlang/Elixir systems where asynchronous message passing is the default. By making \texttt{send} async and requiring explicit \texttt{await} for synchronous communication, NVLang encourages non-blocking message passing while making request-reply patterns explicit. Futures enable pipelined communication: multiple messages can be sent without waiting for responses, improving throughput in distributed computations. The pipe operator (\texttt{|>}) provides clean composition when synchronous responses are needed.

\subsection{Fault Tolerance with Supervisors}

NVLang provides typed access to BEAM's fault-tolerance primitives, including process linking, monitoring, and supervision trees. Supervisors are declared with explicit restart strategies and child specifications:

\begin{lstlisting}[caption={Supervisor with typed worker processes}]
type WorkerMsg = Work | Stop

actor Worker
  fn run(self) ->
    loop
      receive msg: WorkerMsg
        Work ->
          print("Worker: doing work")
          reply "done"
        Stop -> break

supervisor MySupervisor
  strategy one_for_one
  children
    worker1: Worker()
    worker2: Worker(), permanent

fn main() ->
  let sup = spawn MySupervisor()
  # Workers automatically started and monitored
\end{lstlisting}

The supervisor spawns child workers according to the declared strategy (\texttt{one\_for\_one}, \texttt{one\_for\_all}, or \texttt{rest\_for\_one}) and restarts them upon failure. NVLang's type system ensures that child specifications refer to valid actor types, preventing configuration errors that would otherwise manifest at runtime.

\subsection{Design Philosophy}

NVLang's design follows four principles:

\begin{enumerate}
    \item \textbf{Safety without overhead.} Type inference eliminates the need for verbose annotations, making NVLang code nearly as concise as Erlang while providing comprehensive static checking.

    \item \textbf{Leverage existing VM.} By compiling to BEAM bytecode, NVLang inherits Erlang's battle-tested runtime---preemptive scheduling, garbage collection per process, and distribution primitives---without reimplementation.

    \item \textbf{Types as protocol specifications.} ADTs serve as both data definitions and message protocol contracts, providing a unified abstraction that feels natural to functional programmers.

    \item \textbf{Async-first message passing.} Following Erlang/Elixir conventions, \texttt{send} is asynchronous by default (fire-and-forget), making the common case simple while requiring explicit \texttt{await} for synchronous request-reply patterns.
\end{enumerate}

The combination of these principles yields a language that makes actor systems safer without sacrificing the expressiveness that has made Erlang successful for four decades.

Having illustrated NVLang's design through examples, we now formalize the type system and prove its soundness properties.

%% ============================================================================
%% SECTION 3: TYPE SYSTEM
%% ============================================================================

\section{Type System}
\label{sec:types}

NVLang's type system extends the Hindley-Milner type system with specialized constructs for actor-based concurrency. The type system provides static guarantees about message passing while maintaining principal types through parametric polymorphism. This section formalizes NVLang's type syntax, inference algorithm, and novel extensions for typed actors.

\subsection{Type Syntax}

The syntax of types in NVLang is defined by the following grammar:

{\small
\begin{align*}
\tau ::= \; & \texttt{Int} \mid \texttt{Float} \mid \texttt{Bool} \mid \texttt{String} \mid \texttt{Unit} \\
            & \mid (\tau_1, \ldots, \tau_n) \mid [\tau] \mid \texttt{Map}[\tau_k, \tau_v] \\
            & \mid \tau_1 \times \cdots \times \tau_n \to \tau \\
            & \mid T\langle\tau_1, \ldots, \tau_n\rangle \mid \alpha \\
            & \mid \texttt{Pid}[\tau] \mid \texttt{Future}[\tau] \\
            & \mid \texttt{MonitorRef} \mid \texttt{Any}
\end{align*}
}

\noindent Type schemes $\sigma$ allow quantification: $\sigma ::= \forall \bar{\alpha}.\, \tau$

External function declarations enable Erlang interoperability:
{\small
\[\texttt{external fn}\ f(\bar{\tau}) \to \tau = \texttt{mfa}\ m\ f\ n\]
}

\noindent The key novel constructs are:

\begin{itemize}
\item $\texttt{Pid}[\tau]$: A process identifier that accepts messages of type $\tau$. The type parameter enables static verification of message protocols.
\item $\texttt{Future}[\tau]$: A future value that will resolve to type $\tau$, returned by asynchronous send operations.
\item $\texttt{Pid}$: An untyped process identifier, compatible with any $\texttt{Pid}[\tau]$ during unification. This enables interoperability with Erlang code where message types are unknown.
\end{itemize}

\subsection{Type Environments and Judgments}

A type environment $\Gamma$ is a finite mapping from variables to type schemes:
$$\Gamma ::= \emptyset \mid \Gamma, x : \sigma$$

We use the standard typing judgment:
$$\Gamma \vdash e : \tau$$

\noindent which reads ``under environment $\Gamma$, expression $e$ has type $\tau$''. We write $\texttt{ftv}(\tau)$ for the set of free type variables in $\tau$, and $\texttt{ftv}(\Gamma)$ for the free type variables in the codomain of $\Gamma$.

\subsection{Type Inference Rules}

The type inference rules follow the Hindley-Milner discipline with extensions for actor primitives. We present key rules using inference rule notation.

\subsubsection{Literals and Variables}

\begin{mathpar}
\inferrule{ }{\Gamma \vdash n : \texttt{Int}}
\text{(T-Int)}

\inferrule{ }{\Gamma \vdash b : \texttt{Bool}}
\text{(T-Bool)}

\inferrule{ }{\Gamma \vdash s : \texttt{String}}
\text{(T-String)}

\inferrule{x : \sigma \in \Gamma \\ \tau = \texttt{inst}(\sigma)}{\Gamma \vdash x : \tau}
\text{(T-Var)}
\end{mathpar}

\noindent The $\texttt{inst}$ operation instantiates a type scheme by replacing bound type variables with fresh unification variables.

\subsubsection{Functions and Application}

\begin{mathpar}
\inferrule{\Gamma, x_1 : \tau_1, \ldots, x_n : \tau_n \vdash e : \tau}
          {\Gamma \vdash \lambda x_1, \ldots, x_n.\, e : \tau_1 \times \cdots \times \tau_n \to \tau}
\text{(T-Abs)}

\inferrule{\Gamma \vdash e_0 : \tau_1 \times \cdots \times \tau_n \to \tau \\
           \Gamma \vdash e_i : \tau_i \quad (i = 1, \ldots, n)}
          {\Gamma \vdash e_0(e_1, \ldots, e_n) : \tau}
\text{(T-App)}
\end{mathpar}

\subsubsection{Let-Polymorphism}

The key to let-polymorphism is the generalization operation, which quantifies over type variables not free in the environment:

{\small
\begin{mathpar}
\inferrule{\Gamma \vdash e_1 : \tau_1 \quad
           \sigma = \texttt{gen}(\Gamma, \tau_1) \\
           \Gamma, x : \sigma \vdash e_2 : \tau_2}
          {\Gamma \vdash \texttt{let}\ x = e_1\ \texttt{in}\ e_2 : \tau_2}
\text{(T-Let)}
\end{mathpar}
}

\noindent where $\texttt{gen}(\Gamma, \tau) = \forall \bar{\alpha}.\, \tau$ and $\bar{\alpha} = \texttt{ftv}(\tau) \setminus \texttt{ftv}(\Gamma)$. For recursive bindings:

{\small
\begin{mathpar}
\inferrule{\Gamma, x : \alpha \vdash e : \tau \quad
           \mathcal{U}(\alpha, \tau) = \theta}
          {\Gamma \vdash \texttt{let rec}\ x = e\ \texttt{in}\ e' : \theta(\tau')}
\text{(T-LetRec)}
\end{mathpar}
}

\noindent where $\alpha$ is fresh and $\sigma = \texttt{gen}(\Gamma, \theta(\alpha))$.

\subsubsection{Actor Primitives}

NVLang extends the type system with rules for actor operations:

{\footnotesize
\begin{mathpar}
\inferrule{\Gamma \vdash p : \texttt{Pid}[\tau_m] \quad
           \Gamma \vdash m : \tau_m' \\\\
           \mathcal{U}(\tau_m, \tau_m') = \theta \quad
           \tau_r = \texttt{rtype}(\tau_m')}
          {\Gamma \vdash p \mathbin{!} m : \texttt{Future}[\tau_r]}
\text{(T-Send)}
\end{mathpar}

\begin{mathpar}
\inferrule{\Gamma \vdash e : \texttt{Future}[\tau]}
          {\Gamma \vdash \texttt{await}(e) : \tau}
\text{(T-Await)}
\and
\inferrule{A : \text{Actor}[\tau_m] \in \Gamma}
          {\Gamma \vdash \texttt{spawn}\ A() : \texttt{Pid}[\tau_m]}
\text{(T-Spawn)}
\end{mathpar}
}

The \texttt{reply\_type} function extracts the expected reply type from the message constructor. The \texttt{send} operation is asynchronous: it returns a \texttt{Future[$\tau_r$]} without blocking. To obtain the reply value, the future must be explicitly awaited.

\subsubsection{Pattern Matching}

For case expressions with pattern matching:

{\footnotesize
\begin{mathpar}
\inferrule{\Gamma \vdash e : \tau_s \\
           \Gamma \vdash p_i \Downarrow \tau_s : \Gamma_i \\
           \Gamma_i \vdash e_i : \tau \; (\forall i)}
          {\Gamma \vdash \texttt{case}\ e\ \{p_i \Rightarrow e_i\} : \tau}
\text{(T-Case)}
\end{mathpar}
}

\noindent where $\Gamma \vdash p \Downarrow \tau_s : \Gamma'$ denotes that pattern $p$ matches values of type $\tau_s$ and binds variables according to environment $\Gamma'$.

\subsection{Unification Algorithm}

Unification is the core operation that solves type equations. NVLang implements Robinson's unification algorithm with an occurs check to prevent infinite types.

\paragraph{Occurs Check.} Before binding a type variable $\alpha$ to a type $\tau$, we verify that $\alpha \notin \texttt{ftv}(\tau)$:

{\footnotesize
\[\texttt{occurs}(\alpha, \tau) = \begin{cases}
\texttt{true} & \tau = \alpha \\
\texttt{occurs}(\alpha, \tau_i) & \tau = \tau_1 \to \tau_2 \\
\texttt{occurs}(\alpha, \tau') & \tau \in \{[\tau'], \texttt{Pid}[\tau']\} \\
\texttt{false} & \text{otherwise}
\end{cases}\]
}

\paragraph{Unification Rules.} The unification algorithm $\mathcal{U}(\tau_1, \tau_2)$ returns a substitution $\theta$ such that $\theta(\tau_1) = \theta(\tau_2)$:

{\footnotesize
\begin{align*}
\mathcal{U}(\alpha, \tau) &= [\alpha \mapsto \tau] \; (\alpha \notin \texttt{ftv}(\tau)) \\
\mathcal{U}(\tau, \alpha) &= \mathcal{U}(\alpha, \tau) \\
\mathcal{U}(c, c) &= \emptyset \\
\mathcal{U}(\tau_1 \to \tau_2, \tau_1' \to \tau_2') &= \mathcal{U}(\tau_2, \tau_2') \circ \mathcal{U}(\tau_1, \tau_1') \\
\mathcal{U}([\tau], [\tau']) &= \mathcal{U}(\tau, \tau') \\
\mathcal{U}(\texttt{Pid}[\tau], \texttt{Pid}[\tau']) &= \mathcal{U}(\tau, \tau') \\
\mathcal{U}(\texttt{Pid}, \texttt{Pid}[\tau]) &= \emptyset \\
\mathcal{U}(\texttt{Future}[\tau], \texttt{Future}[\tau']) &= \mathcal{U}(\tau, \tau')
\end{align*}
}

\noindent Unification fails for incompatible types. Untyped \texttt{Pid} is compatible with any \texttt{Pid}[$\tau$].

\subsection{Type Schemes and Generalization}

Type schemes enable let-polymorphism, allowing local definitions to have polymorphic types. The generalization operation $\texttt{gen}(\Gamma, \tau)$ quantifies over type variables in $\tau$ not free in $\Gamma$:

{\small
\[\texttt{gen}(\Gamma, \tau) = \forall \bar{\alpha}.\, \tau \;\; \text{where } \bar{\alpha} = \texttt{ftv}(\tau) \setminus \texttt{ftv}(\Gamma)\]
}

\noindent Instantiation $\texttt{inst}(\sigma)$ replaces bound variables with fresh unification variables:

{\small
\[\texttt{inst}(\forall \bar{\alpha}.\, \tau) = [\bar{\alpha} \mapsto \bar{\beta}](\tau)\]
}

\noindent where $\bar{\beta}$ are fresh type variables.

This ensures that polymorphic functions like $\texttt{length} : \forall \alpha.\ [\alpha] \to \texttt{Int}$ can be instantiated at different types in different contexts.

\subsection{Actor Type Tracking}

NVLang's type system tracks actor message types through the compilation pipeline. Each actor definition is analyzed to extract:

\begin{enumerate}
\item \textbf{Message type}: The parent ADT of messages the actor receives
\item \textbf{Reply types}: A mapping from message constructors to their reply types
\end{enumerate}

We formalize this extraction process using inference rules. The actor analysis judgment has the form:
$$\texttt{ActorAnalysis}(A) \Rightarrow (M, R)$$
where $A$ is an actor definition, $M$ is the message type, and $R : \texttt{Constructor} \to \texttt{Type}$ is the reply map.

\subsubsection{Message Type Extraction}

The message type is extracted from the \texttt{receive} block's pattern type annotation. Given an actor definition $A$ containing a receive block with type annotation $M$:

{\small
\begin{mathpar}
\inferrule{
  A = \texttt{actor}\ N\ \{ \ldots \texttt{receive}\ msg : M \ldots \}
}{
  \texttt{MessageType}(A) = M
}
\text{(Extract-Msg)}
\end{mathpar}
}

\subsubsection{Reply Type Computation}

For each constructor $C$ of the message type $M$, we analyze the corresponding case branch to determine its reply type:

{\footnotesize
\begin{mathpar}
\inferrule{
  M = C_1 \mid \cdots \mid C_n \\
  \forall i.\; \Gamma \vdash e_i : \tau_i' \\
  \forall i.\; \Gamma \vdash \texttt{ReplyExpr}(e_i) : \tau_r
}{
  \texttt{Analysis}(A) \Rightarrow (M, R)
}
\end{mathpar}
}

\noindent where $R$ maps each $C_i$ to $\tau_r$.

The auxiliary function $\texttt{ReplyExpr}(e)$ extracts the reply expression:

{\small
\begin{mathpar}
\inferrule{e = \texttt{reply}\ r}{\texttt{ReplyExpr}(e) = r}
\text{(Direct)}
\and
\inferrule{e = e_1; \texttt{reply}\ r; e_2}{\texttt{ReplyExpr}(e) = r}
\text{(Seq)}
\and
\inferrule{e \text{ has no } \texttt{reply}}{\texttt{ReplyExpr}(e) = \texttt{unit}}
\text{(Unit)}
\end{mathpar}
}

\paragraph{Uniform Reply Types.} NVLang enforces that all constructors in an actor's message type reply with the same type:

{\small
\begin{mathpar}
\inferrule{
  \texttt{ActorAnalysis}(A) \Rightarrow (M, R) \\
  \forall C \in \texttt{ctors}(M). \; R(C) = \tau
}{
  \texttt{WellFormedActor}(A)
}
\text{(Uniform-Reply)}
\end{mathpar}
}

This uniformity requirement simplifies the type system and enables predictable future types: if an actor $A$ has message type $M$ and uniform reply type $\tau_r$, then $\texttt{spawn}\ A() : \texttt{Pid}[M]$ and sending any message to that PID yields $\texttt{Future}[\tau_r]$.

The actor analysis information is stored in the type environment and used during type checking of \texttt{spawn} and send operations.

\subsection{Typed PIDs and Futures}

\paragraph{Typed PIDs.} The $\texttt{Pid}[\tau]$ type constructor enables static verification of message protocols:

\begin{itemize}
\item $\texttt{Pid}[M]$ represents an actor that accepts messages of type $M$
\item $\texttt{Pid}$ (no parameter) represents an untyped PID, equivalent to $\texttt{Pid}[\texttt{Any}]$
\item The type system enforces that messages sent to a typed PID conform to the expected type
\end{itemize}

\noindent The unification rules for PIDs allow:
\begin{itemize}
\item $\mathcal{U}(\texttt{Pid}[\tau_1], \texttt{Pid}[\tau_2]) = \mathcal{U}(\tau_1, \tau_2)$
\item $\mathcal{U}(\texttt{Pid}, \texttt{Pid}[\tau]) = \emptyset$ (untyped PIDs are compatible with any typed PID)
\end{itemize}

\paragraph{Typed Futures.} Send operations are asynchronous by default, immediately returning $\texttt{Future}[\tau]$ values representing pending replies:

\begin{itemize}
\item $\texttt{Future}[\tau]$ represents a computation that will eventually produce a value of type $\tau$
\item The \texttt{await} operation extracts the value: $\texttt{await} : \texttt{Future}[\tau] \to \tau$, blocking until the reply arrives
\item Reply types are computed by analyzing the message handler's reply statements
\item Futures can be passed around, stored, and awaited later, enabling pipelined communication
\end{itemize}

\noindent If a message handler does not issue a reply, the future type defaults to $\texttt{Unit}$. This async-first design matches Erlang/Elixir conventions where fire-and-forget messaging is the common case, with explicit \texttt{await} required for synchronous request-reply patterns.

\subsection{Soundness Properties}

NVLang's type system satisfies standard soundness properties:

\begin{theorem}[Progress]\label{thm:progress}
If $\emptyset \vdash e : \tau$, then either $e$ is a value or there exists $e'$ such that $e \longrightarrow e'$.
\end{theorem}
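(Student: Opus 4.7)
The plan is to prove progress by structural induction on the derivation of $\emptyset \vdash e : \tau$, in the usual Wright--Felleisen style. The supporting ingredient is a \emph{canonical forms} lemma for each closed type constructor, proved by inversion on the value and typing judgments: a closed value of type $\tau_1 \times \cdots \times \tau_n \to \tau$ is a lambda $\lambda x_1,\dots,x_n.\,e'$; a value of an ADT type $T\langle\bar\tau\rangle$ is a saturated constructor application $C(v_1,\dots,v_k)$; a value of type $[\tau]$ is a (possibly empty) cons-list of values; a value of type $\texttt{Pid}[\tau]$ is a runtime process identifier; and a value of type $\texttt{Future}[\tau]$ is a future handle (treated as a value whether or not it already carries a resolved payload). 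These lemmas are what let us inspect the shape of the head subexpression once the inductive hypothesis says it has become a value.

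\textbf{Routine cases.} The introductions T-Int, T-Bool, T-String and T-Abs give values directly. T-Var is vacuous because $\emptyset$ contains no bindings. For the eliminators T-App, T-Let, T-Case, as well as the compound actor constructs, I would apply the IH to each subexpression in left-to-right evaluation order: if some subexpression steps, a congruence rule on the evaluation context lifts the step to the whole expression; if all subexpressions are values, canonical forms identifies the shape of the head value, and the corresponding operational rule fires. For T-App this is $\beta$-reduction; for T-Let it is substitution; for T-Case it is pattern selection, where exhaustiveness of the match over the declared ADT (enforced earlier at well-formedness time) guarantees that some branch applies.

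\textbf{Actor primitives.} T-Spawn steps unconditionally, since the spawn rule fabricates a fresh process identifier, which is already a value of type $\texttt{Pid}[\tau_m]$. For T-Send, once $p$ has been reduced to a $\texttt{Pid}[\tau_m]$ and $m$ to a message value (by IH and congruence), the asynchronous-send rule fires: it deposits $m$ into the target mailbox and returns a fresh $\texttt{Future}[\tau_r]$ handle as the value of the expression. For T-Await, canonical forms on $\texttt{Future}[\tau]$ yields a future handle $f$; the await-resolve rule extracts its payload when the future is carrying one, and otherwise the reduction relation is designed so that $\texttt{await}(f)$ still takes an administrative polling step. With this choice the two-way disjunction in the theorem holds literally, since every well-typed $\texttt{await}$ on a closed expression is always eligible for at least one reduction.

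\textbf{Main obstacle.} The delicate case is T-Await (and, by the same token, the \texttt{receive} elimination inside actor bodies). On a naive reading, a closed well-typed $\texttt{await}(f)$ could become permanently stuck if the underlying reply never arrives, which would refute the statement. I would discharge this by fixing the operational interpretation so that pending awaits and empty-mailbox receives always admit an administrative self-step, preserving the strict ``value or steps'' dichotomy the theorem asserts. The genuine liveness content---that every such pending await \emph{eventually} resolves---is then deferred to a separate fairness argument at the configuration level and is not required for progress as worded here. Once this operational convention is fixed, the remainder of the proof is a routine structural induction resting on the canonical forms lemmas sketched above.
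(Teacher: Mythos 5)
Your proposal follows essentially the same route as the paper's own proof sketch: structural induction with canonical forms lemmas, with application, case expressions, and the actor primitives as the key cases, so on that score it matches. The one place you go beyond the paper is the \texttt{await} case: the paper's sketch simply asserts that ``typing rules ensure operands are of the correct form to enable reduction,'' which glosses over the fact that a closed, well-typed $\texttt{await}(f)$ on an unresolved $\texttt{Future}[\tau]$ is not obviously reducible in a single-expression semantics; your explicit operational convention (an administrative self-step for pending awaits and empty-mailbox receives, with liveness deferred to a configuration-level fairness argument) is exactly the device needed to make the theorem literally true as stated, and identifying that obstacle is a genuine improvement on the paper's treatment.
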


\begin{proof}[Proof Sketch]
By induction on the structure of $e$. For each expression form (literals, variables, function applications, let bindings, case expressions, actor primitives), we show that either the expression is a value or can take a step. The key cases are: (1) function applications, where well-typedness ensures the function position evaluates to a lambda; (2) case expressions, where exhaustiveness checking guarantees a matching pattern exists; and (3) actor operations (send, await, spawn), where typing rules ensure operands are of the correct form to enable reduction.
\end{proof}

\begin{theorem}[Preservation]\label{thm:preservation}
If $\Gamma \vdash e : \tau$ and $e \longrightarrow e'$, then $\Gamma \vdash e' : \tau$.
\end{theorem}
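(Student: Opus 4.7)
The plan is to proceed by induction on the derivation of $e \longrightarrow e'$, following the standard preservation proof for Hindley-Milner systems with additional cases for the actor primitives. Before the main induction, I would establish two auxiliary lemmas: (i) a substitution lemma stating that if $\Gamma, x : \tau' \vdash e : \tau$ and $\Gamma \vdash v : \tau'$, then $\Gamma \vdash e[v/x] : \tau$, proven by structural induction on the typing derivation of $e$; and (ii) a collection of inversion lemmas giving, for each syntactic form, the shape of any typing derivation that assigns it a given type. I would also need a weakening lemma to handle context extensions introduced by pattern binders.

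For the purely functional reduction cases, the argument is routine. For $\beta$-reduction $(\lambda \bar{x}.\, e)\,\bar{v} \longrightarrow e[\bar{v}/\bar{x}]$, inversion on T-App and T-Abs exposes a consistent typing for the parameters, and the substitution lemma delivers the goal. For \texttt{let}-reduction, the generalization step in T-Let requires a value restriction: because $e_1$ must reduce to a value $v$ before substitution, generalization over $v$ is sound, and the polymorphic scheme assigned to $x$ can be instantiated at each use site of $e_2$. For \texttt{case}, inversion on T-Case combined with a pattern-matching lemma $\Gamma \vdash p \Downarrow \tau_s : \Gamma_i$ ensures the selected branch body remains well-typed after the pattern binders are substituted.

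The novel cases concern the actor primitives. For $\spawnexp{A()} \longrightarrow \mathit{pid}$ where $A$ has message type $M$ with $\texttt{WellFormedActor}(A)$, the actor analysis invariant $\texttt{ActorAnalysis}(A) \Rightarrow (M, R)$ ensures the freshly allocated identifier is assigned type $\pid{M}$, matching T-Spawn. For $p \mathbin{!} m \longrightarrow f$ with $\Gamma \vdash p : \pid{\tau_m}$ and $\Gamma \vdash m : \tau_m$, the resulting future token carries reply type $\texttt{rtype}(\tau_m)$ by construction, which is exactly the type demanded by T-Send. For $\awaitexp{f} \longrightarrow v$, we need that the value delivered into the future has type $\tau$ whenever $f : \future{\tau}$; this does not follow from a local reduction step alone and must be justified by a global invariant maintained across actor configurations.

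The main obstacle is precisely that last global invariant, because it is not a local property of a single reducing expression: it requires reasoning about the entire actor configuration --- the set of running processes, their mailboxes, and the outstanding futures. I would therefore strengthen the statement to a typed configuration judgment $\vdash \langle \Pi, \mu, \Phi \rangle$ asserting that each process $\pi \in \Pi$ is well-typed under its actor signature, each mailbox $\mu(\pi)$ contains only values of $\pi$'s declared message type $M_\pi$, and each pending future $\phi \in \Phi$ records a reply type consistent with the pending $\replyexp{r}$ in the corresponding receiver. Preservation is then proven for this judgment, with the unification step in T-Send and the \texttt{Uniform-Reply} well-formedness condition together guaranteeing that no ill-typed message ever enters a mailbox and no ill-typed value ever resolves a future; the original statement follows as the single-expression projection of this invariant.
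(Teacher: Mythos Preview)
Your proposal follows the same overall strategy as the paper's proof sketch: induction on the reduction derivation, a substitution lemma for $\beta$-reduction and \texttt{let}, and case analysis handling pattern matching and the actor primitives. On the purely functional fragment the two treatments coincide.

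Where you diverge is in the actor cases, and here you are more careful than the paper. The paper's sketch simply asserts that ``actor primitives (send returning futures, await extracting values) preserve their respective type invariants'' without explaining how the type of the value resolving a future is connected to the future's declared parameter. You correctly observe that $\awaitexp{f} \longrightarrow v$ cannot be justified by local reasoning alone, and you propose lifting preservation to a typed-configuration judgment $\vdash \langle \Pi, \mu, \Phi \rangle$ that tracks per-process typing, mailbox conformance, and pending future types, using \texttt{Uniform-Reply} and the unification in T-Send to maintain the invariant. This is the standard and appropriate way to close the gap; the paper's sketch leaves that machinery implicit. What your version buys is an actual account of why \texttt{await} is sound; what the paper's version buys is brevity at the cost of leaving the hardest case unargued.
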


\begin{proof}[Proof Sketch]
By induction on the derivation of $e \longrightarrow e'$. The proof relies on a substitution lemma showing that if $\Gamma, x : \tau_1 \vdash e : \tau_2$ and $\Gamma \vdash v : \tau_1$, then $\Gamma \vdash [v/x]e : \tau_2$. For each reduction rule, we show that the type is preserved: (1) beta-reduction uses the substitution lemma; (2) let-bindings preserve types through substitution; (3) case expression reduction maintains the branch type; and (4) actor primitives (send returning futures, await extracting values) preserve their respective type invariants.
\end{proof}

\noindent Additionally, the actor-specific constructs maintain:

\begin{theorem}[Message Type Safety]\label{thm:message-safety}
If $\Gamma \vdash p : \texttt{Pid}[\tau_m]$ and $\Gamma \vdash m : \tau_m'$ are both derivable, then there exists a substitution $\theta$ such that $\theta(\tau_m) = \theta(\tau_m')$; otherwise, type checking fails.
\end{theorem}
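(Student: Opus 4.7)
The plan is to derive the theorem by inversion on the typing derivation for the send expression in which $p$ and $m$ co-occur, then appeal to Robinson's unification to extract the required substitution $\theta$. Read in its intended context---a send $p \mathbin{!} m$---the claim is that a well-typed send forces the PID's payload type and the message's type to be unifiable, and that failure of unification blocks type checking.

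First I would observe that T-Send is the only rule whose conclusion assigns a type to an expression of the form $p \mathbin{!} m$. Thus any derivation $\Gamma \vdash p \mathbin{!} m : \tau$ must end with T-Send, whose premises include $\Gamma \vdash p : \texttt{Pid}[\tau_m]$, $\Gamma \vdash m : \tau_m'$, and $\mathcal{U}(\tau_m, \tau_m') = \theta$. Inversion delivers exactly the substitution promised by the theorem, and $\theta(\tau_m) = \theta(\tau_m')$ holds by the defining property of $\mathcal{U}$. For the contrapositive, suppose no substitution unifies $\tau_m$ with $\tau_m'$. By completeness of Robinson's algorithm---preserved here because each novel constructor ($\texttt{Pid}[\cdot]$, $\texttt{Future}[\cdot]$) is handled structurally---the call $\mathcal{U}(\tau_m, \tau_m')$ fails, the last T-Send premise cannot be discharged, and since no other rule applies to $p \mathbin{!} m$, the expression is untypable. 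The untyped-PID exception $\mathcal{U}(\texttt{Pid}, \texttt{Pid}[\tau]) = \emptyset$ only \emph{enlarges} the set of successful unifications, so it does not threaten this direction; the occurs check rules out spurious positive answers via infinite types.

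The main obstacle I anticipate is the interplay with let-polymorphism and instantiation: both $p$ and $m$ are typed via T-Var (or T-App), which applies $\texttt{inst}$ and thereby introduces fresh unification variables, so the types $\tau_m$ and $\tau_m'$ appearing in the theorem statement are not intrinsic to $p$ and $m$ but depend on the derivation. I would address this by appealing to the principal-types property of Hindley--Milner, extended to cover the actor primitives: each expression admits a principal typing, and unifiability at those principal types is equivalent to unifiability at any other derivable instantiations. The theorem then reads cleanly as: whenever $p$ and $m$ co-occur in a well-typed send, their principal typings admit a joint unifier, and otherwise the send is rejected at compile time. A small auxiliary lemma stating that \texttt{rtype} is well-defined on any $\tau_m'$ that survives unification with a constructor of the declared message ADT (guaranteed by the uniform-reply well-formedness rule) closes the remaining gap between the premise $\tau_r = \texttt{rtype}(\tau_m')$ and the availability of $\theta$.
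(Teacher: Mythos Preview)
Your proposal is correct and follows the same core argument as the paper: inversion on the T-Send rule, whose premise $\mathcal{U}(\tau_m,\tau_m')=\theta$ directly furnishes the unifier, with failure of unification blocking the derivation. The paper's own proof (given for the restated version in Section~\ref{sec:actors}) is a two-line inversion argument and does not engage with the instantiation/principal-types subtleties or the \texttt{rtype} side condition you raise; your treatment is strictly more careful than what the paper provides, but the underlying approach is the same.
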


\subsection{Principal Types}

NVLang's type system computes principal (most general) types for all well-typed expressions:

\begin{theorem}[Principal Types]\label{thm:principal}
For any expression $e$, if $\Gamma \vdash e : \tau$, then there exists a principal type $\tau_p$ such that for any other derivation $\Gamma' \vdash e : \tau'$, there exists a substitution $\theta$ with $\theta(\tau_p) = \tau'$.
\end{theorem}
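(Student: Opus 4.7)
The plan is to prove principality via the standard Damas--Milner strategy: exhibit an Algorithm W-style inference procedure $\mathcal{W}(\Gamma, e)$ that, on success, returns a pair $(\theta, \tau)$ consisting of a substitution and a raw type, and then establish (i) soundness, that $\mathcal{W}(\Gamma, e) = (\theta, \tau)$ implies $\theta\Gamma \vdash e : \tau$, and (ii) completeness, that whenever $\Gamma \vdash e : \tau'$ is derivable there exist $(\theta, \tau)$ returned by $\mathcal{W}$ and a substitution $\varphi$ with $\varphi(\theta(\tau)) = \tau'$ and $\varphi \circ \theta$ agreeing with the identity on $\Gamma$. The principal type scheme is then $\texttt{gen}(\theta\Gamma, \tau)$, and any other valid type is an instance of it by construction.

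First I would define $\mathcal{W}$ by structural recursion mirroring the rules of Section~\ref{sec:types}: literals return $(\emptyset, \texttt{Int})$, etc.; variables return a fresh instantiation of their scheme; abstractions introduce fresh unification variables for parameters, recurse on the body, and assemble an arrow type; applications recurse on operator and arguments, then invoke $\mathcal{U}$ to equate the operator's domain with the argument types; \texttt{let} follows the usual $\texttt{gen}/\texttt{inst}$ pattern under the accumulated substitution; \texttt{case} unifies the scrutinee type with each pattern's type and unifies all branch types together. For the actor primitives, \texttt{spawn}~$A()$ directly yields $\texttt{Pid}[M]$ from $\texttt{ActorAnalysis}(A)$; \texttt{await}~$e$ unifies the inferred type of $e$ with $\texttt{Future}[\beta]$ for fresh $\beta$ and returns $\beta$; and $p \mathbin{!} m$ unifies the inferred PID's parameter with the inferred message type and then looks up the reply type via the reply map $R$, which is well-defined by the Uniform-Reply side condition.

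The two load-bearing lemmas are the substitution lemma (if $\Gamma \vdash e : \tau$ then $\theta\Gamma \vdash e : \theta\tau$), which proceeds by induction on the typing derivation, and Robinson's MGU property for $\mathcal{U}$: if $\mathcal{U}(\tau_1, \tau_2) = \theta$ and $\varphi$ is any unifier of $\tau_1$ and $\tau_2$, then $\varphi = \psi \circ \theta$ for some $\psi$. The MGU property extends cleanly to the new parameterized constructors $\texttt{Pid}[\cdot]$ and $\texttt{Future}[\cdot]$, which behave structurally like any other unary type constructor. Given these, completeness follows by induction on the typing derivation, using the MGU property at each unification site to show that $\mathcal{W}$'s substitution is at most as specific as any witnessing substitution for the given derivation.

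The main obstacle is the asymmetric rule $\mathcal{U}(\texttt{Pid}, \texttt{Pid}[\tau]) = \emptyset$, which deliberately sidesteps Robinson's framework in order to accommodate Erlang interoperability: an expression like $\texttt{let}~f = \lambda p.~p \mathbin{!} \mathit{Ping}$ can be typed with either $p : \texttt{Pid}$ or $p : \texttt{Pid}[\tau_m]$, and neither type is an instance of the other under ordinary substitution. I would address this by treating untyped $\texttt{Pid}$ as shorthand for $\texttt{Pid}[\texttt{Any}]$ where $\texttt{Any}$ is the top element of a trivial subtyping preorder lifted to PIDs only, and then state the principal types theorem modulo this preorder: every derivation factors through $\mathcal{W}$'s output via a substitution composed with zero or more $\texttt{Any} \preceq \tau$ relaxations at PID positions. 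An equivalent formulation, which I would present as a remark, is to restrict the theorem to programs whose free variables are not assigned the untyped $\texttt{Pid}$ type, in which case classical principality holds without modification.
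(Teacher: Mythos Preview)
Your proposal is correct and follows essentially the same route as the paper's own proof sketch: both appeal to Algorithm~W completeness, carry the induction on expression structure, and rely on the MGU property of $\mathcal{U}$, with the new constructors $\texttt{Pid}[\cdot]$ and $\texttt{Future}[\cdot]$ treated as ordinary unary type formers. The paper's sketch is considerably terser than yours---it simply asserts that ``unification produces principal substitutions'' and that actor types ``are parametric type constructors [that] unification treats uniformly with other type constructors, preserving principality''---but the skeleton is the same.

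Where you genuinely go beyond the paper is in flagging the asymmetric rule $\mathcal{U}(\texttt{Pid}, \texttt{Pid}[\tau]) = \emptyset$ as a threat to principality. The paper's sketch does not mention the untyped-\texttt{Pid} case at all, and its claim that the actor constructors are treated ``uniformly'' is, as you note, not strictly true: this rule is not an MGU in Robinson's sense, and an expression that uses a PID both at $\texttt{Pid}$ and at $\texttt{Pid}[\tau]$ admits types that are not substitution instances of one another. Your two proposed remedies---recasting bare $\texttt{Pid}$ as $\texttt{Pid}[\texttt{Any}]$ under a localized preorder, or restricting the theorem to environments with no untyped PIDs---are both sound ways to close the gap; the paper itself leaves it open. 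This is a point worth making explicit if you write the proof up in full.
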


\begin{proof}[Proof Sketch]
The proof follows the standard completeness argument for Algorithm W. By induction on the structure of expressions, we show that Algorithm W computes the most general unifier at each step. The key insight is that unification produces principal (most general) substitutions, and the generalization step in let-polymorphism quantifies over all free type variables not constrained by the environment. For actor types, the same principle applies: typed PIDs and futures are parametric type constructors, and unification treats them uniformly with other type constructors, preserving principality.
\end{proof}

\noindent This property is inherited from Hindley-Milner and extends to actor types. The inference algorithm (Algorithm W) computes principal types through a combination of fresh type variable generation, constraint collection, and constraint solving via unification.

The type system foundations established above enable our typed actor model, which we now describe in detail.

%% ============================================================================
%% SECTION 4: TYPED ACTOR MODEL
%% ============================================================================

\section{Typed Actor Model}
\label{sec:actors}

NVLang implements a statically-typed actor model with OTP-style supervision semantics, providing formal guarantees about message passing safety while maintaining Erlang's fault-tolerance properties. Unlike traditional actor systems where message types are unchecked, NVLang's type system prevents sending incorrect messages to actors at compile time, eliminating an entire class of runtime errors common in distributed systems.

\subsection{Actor Definition and Semantics}

Actors in NVLang are defined with explicit message types that constrain what messages an actor can process. An actor definition consists of a message protocol (an algebraic data type) and a \texttt{run} method that implements the actor's behavior. The \texttt{receive} construct performs pattern matching on incoming messages.

\paragraph{Formal Semantics} We formalize actors as state machines with typed message queues. Let $\mathcal{A}$ be the set of all actors, $\mathcal{M}$ the set of message types, and $\mathcal{S}$ the set of actor states.

\begin{definition}[Actor]
An actor $a \in \mathcal{A}$ is a tuple $(M_a, s_a, \delta_a, q_a)$ where:
\begin{itemize}
    \item $M_a \in \mathcal{M}$ is the actor's message type
    \item $s_a \in \mathcal{S}$ is the actor's current state
    \item $\delta_a : \mathcal{S} \times M_a \rightarrow \mathcal{S} \times \mathcal{R}_a$ is the message handler
    \item $q_a : [M_a]$ is the message queue
    \item $\mathcal{R}_a$ is the set of reply types for actor $a$
\end{itemize}
\end{definition}

The operational semantics for message processing is given by:

{\small
\[
\frac{q_a = m :: q' \quad \delta_a(s_a, m) = (s_a', r)}{(s_a, q_a) \xrightarrow{\text{process}} (s_a', q') \text{ with reply } r}
\]
}

\subsection{Typed PIDs and Message Passing}

NVLang tracks the message type that each process accepts through a parametric \texttt{Pid} type. When an actor is spawned, the type system infers the correct \texttt{Pid} type from the actor's message protocol.

\paragraph{Spawn Operation} The \texttt{spawn} primitive creates a new actor and returns a typed process identifier:

{\small
\[
\inferrule{
  \text{Actor}(A) \vdash M_A \quad
  \Gamma \vdash A() : \texttt{Unit}
}{
  \Gamma \vdash \texttt{spawn } A() : \texttt{Pid}[M_A]
}
\]
}

\noindent This ensures that the returned PID can only be used to send messages of type $M_A$.

\paragraph{Send Operation} The \texttt{send} operation is asynchronous by default, verifying message types at compile time and immediately returning a future:

{\small
\[
\inferrule{
  \Gamma \vdash p : \texttt{Pid}[M] \quad
  \Gamma \vdash m : M \quad
  \texttt{ReplyType}(M, m) = R
}{
  \Gamma \vdash p.\texttt{send } m : \texttt{Future}[R]
}
\]
}

\noindent The type system extracts the reply type $R$ from the actor's message handlers. The \texttt{send} operation does not block---it enqueues the message and returns immediately. To obtain the reply value, the returned future must be explicitly awaited.

\paragraph{Receive Operation} Pattern matching in \texttt{receive} is exhaustiveness-checked:

{\small
\[
\inferrule{
  \Gamma \vdash e : M \quad
  \texttt{Exhaustive}(M, \{p_i\}) \\
  \forall i. \; \Gamma, \texttt{bindings}(p_i, M) \vdash b_i : T
}{
  \Gamma \vdash \texttt{receive } e\ \{p_i \to b_i\} : T
}
\]
}

\noindent The exhaustiveness check ensures all message variants are handled.

\subsection{Typed Futures and Async-First Messaging}

NVLang implements a typed request-response pattern through futures with async-first semantics. The \texttt{send} operation is asynchronous by default (fire-and-forget), immediately returning a future without blocking. When a message is sent, the system automatically tracks the sender's PID and returns a future that will be fulfilled when the actor replies. To obtain the reply value synchronously, the future must be explicitly awaited.

A future is a first-class value representing a computation that will complete asynchronously. The type system infers the future's result type from the actor's reply type. Given a PID with message type $M = C_1 \mid \cdots \mid C_n$ and reply map $R$:

{\small
\[
\inferrule{
  \Gamma \vdash p : \texttt{Pid}[M] \quad
  \texttt{ActorReplies}(p, C_i) = R_i
}{
  \Gamma \vdash p.\texttt{send } C_i(v) : \texttt{Future}[R_i]
}
\]
}

\noindent The \texttt{await} function blocks until a future completes:

{\small
\[
\inferrule{
  \Gamma \vdash f : \texttt{Future}[T]
}{
  \Gamma \vdash \texttt{await } f : T
}
\]
}

\subsection{OTP-Style Supervision Trees}

NVLang provides first-class support for supervision trees, enabling fault-tolerant systems through automatic process restart.

\paragraph{Supervision Strategies} NVLang implements three standard OTP supervision strategies:

\begin{itemize}
    \item \textbf{OneForOne}: When a child crashes, only that child is restarted (Figure~\ref{fig:one-for-one})
    \item \textbf{OneForAll}: When any child crashes, all children are restarted
    \item \textbf{RestForOne}: When a child crashes, that child and all children started after it are restarted
\end{itemize}

\begin{definition}[Supervision Strategy]
Let $C = \{c_1, \ldots, c_n\}$ be a set of supervised children in start order. The restart set $R_s(c_i)$ for strategy $s$ is:

\begin{align*}
R_{\text{one\_for\_one}}(c_i) &= \{c_i\} \\
R_{\text{one\_for\_all}}(c_i) &= C \\
R_{\text{rest\_for\_one}}(c_i) &= \{c_j \mid j \geq i\}
\end{align*}
\end{definition}

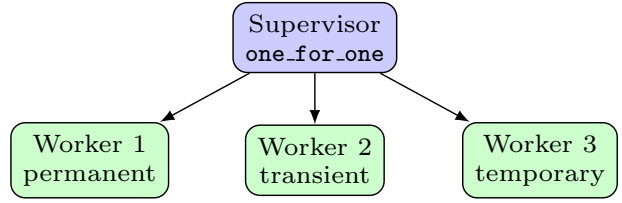
\begin{figure}[t]
\centering
\resizebox{\columnwidth}{!}{%
\begin{tikzpicture}[
  level 1/.style={sibling distance=2.2cm, level distance=1.2cm},
  every node/.style={rectangle, draw, rounded corners, minimum height=0.6cm, minimum width=1.2cm, align=center, font=\scriptsize},
  edge from parent/.style={draw, -latex}
]

\node[fill=blue!20] {Supervisor\\\texttt{one\_for\_one}}
  child { node[fill=green!20] {Worker 1\\permanent} }
  child { node[fill=green!20] {Worker 2\\transient} }
  child { node[fill=green!20] {Worker 3\\temporary} };

\end{tikzpicture}%
}
\caption{OneForOne supervision strategy: when Worker~2 crashes, only Worker~2 is restarted while Workers~1 and~3 continue running. The restart policy (permanent/transient/temporary) determines whether restart occurs.}
\label{fig:one-for-one}
\end{figure}

\paragraph{Restart Types} Each child has a restart policy:

\begin{itemize}
    \item \textbf{Permanent}: Always restart the child when it terminates
    \item \textbf{Transient}: Restart only on abnormal termination (crashes)
    \item \textbf{Temporary}: Never restart; one-shot processes
\end{itemize}

The supervisor monitors restart frequency with configurable limits. If the limit is exceeded, the supervisor itself crashes, propagating the failure up the supervision tree.

\subsection{Supervision Tree Semantics}

We formalize supervision as a tree structure with restart semantics:

\begin{definition}[Supervision Tree]
A supervision tree $T$ is a rooted tree where:
\begin{itemize}
    \item Internal nodes are supervisors $(s, R, L)$ with strategy $s$, restart set function $R$, and limit $L$
    \item Leaf nodes are worker processes $w : \texttt{Pid}[M]$
    \item Edges represent supervision relationships
\end{itemize}
\end{definition}

The crash propagation semantics are given by:

{\small
\[
\frac{
  c \xrightarrow{\text{crash}} \bot \quad R_s(c) = C' \quad \texttt{Count}(\text{sup}) < L
}{
  \text{sup} \xrightarrow{\text{restart}} \text{sup}' \quad \forall c' \in C'. \; c' \xrightarrow{\text{spawn}} c'_{\text{new}}
}
\]

\[
\frac{
  c \xrightarrow{\text{crash}} \bot \quad \texttt{Count}(\text{sup}) \geq L
}{
  \text{sup} \xrightarrow{\text{crash}} \bot
}
\]
}

\subsection{Type Safety Guarantees}

NVLang's typed actor system provides several critical safety properties:

\begin{theorem}[Message Type Safety]
If $\Gamma \vdash p : \texttt{Pid}[M]$ and $\Gamma \vdash p.\texttt{send } m : \texttt{Future}[R]$, then $m : M$.
\end{theorem}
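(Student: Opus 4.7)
The approach is by inversion on the typing derivation of the send expression. Since T-Send is the only rule that assigns a $\texttt{Future}$ type to a send expression, the hypothesis $\Gamma \vdash p.\texttt{send } m : \texttt{Future}[R]$ must be derived via T-Send. Inverting this rule yields types $\tau_m$ and $\tau_m'$ together with $\Gamma \vdash p : \texttt{Pid}[\tau_m]$, $\Gamma \vdash m : \tau_m'$, a successful unification $\mathcal{U}(\tau_m, \tau_m') = \theta$, and $R = \theta(\texttt{rtype}(\tau_m'))$. This is a purely mechanical step and sets up the two facts about $p$ and $m$ that the rest of the argument has to reconcile.

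The second step combines the inverted premise with the other hypothesis $\Gamma \vdash p : \texttt{Pid}[M]$ via Theorem~\ref{thm:principal}. Both derivations of $p$'s type must be instances of $p$'s principal type scheme, so $\texttt{Pid}[\tau_m]$ and $\texttt{Pid}[M]$ are unifiable; let $\theta_p$ be their unifier. Composing $\theta_p$ with $\theta$ yields a substitution $\theta^{\ast}$ satisfying $\theta^{\ast}(\tau_m') = \theta^{\ast}(\tau_m) = \theta^{\ast}(M)$, which is precisely the meaning of ``$m : M$'' in the HM setting, where inferred types carry unification variables that may be further specialized downstream. The untyped-PID case $\tau_m = \texttt{Pid}$ is handled separately by the dedicated rule $\mathcal{U}(\texttt{Pid}, \texttt{Pid}[\tau]) = \emptyset$, which makes the equality hold vacuously.

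The main obstacle is not the inversion argument but the precise formalization of the conclusion. The informal statement ``$m : M$'' suggests a set-theoretic equality of types, yet in a Hindley-Milner system the inferred type $\tau_m'$ of $m$ contains fresh unification variables and agrees with $M$ only modulo a substitution. A rigorous restatement would therefore assert the existence of $\theta^{\ast}$ with $\theta^{\ast}(\tau_m') = \theta^{\ast}(M)$; the proof obligation is to exhibit this witness explicitly and confirm that it is the substitution actually produced by Algorithm W during inference. Splitting off the untyped-PID case as a separate clause is worth doing, since its justification relies on the asymmetric unification rule rather than on ordinary unifiability, and folding it into the generic argument tends to obscure both branches.
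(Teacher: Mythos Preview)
Your proof is correct and follows the same inversion-on-the-send-rule approach as the paper. The paper's own argument is a two-line sketch (``by the typing rule for send, $\Gamma \vdash m : M'$ and we must unify $M' \equiv M$, thus $m : M$''); your version is considerably more careful---invoking Theorem~\ref{thm:principal} to reconcile the two derivations of $p$'s type, making the up-to-substitution reading of ``$m : M$'' explicit, and isolating the untyped-\texttt{Pid} case---but the underlying strategy is identical.
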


\begin{proof}
By the typing rule for send, we have $\Gamma \vdash m : M'$ and must unify $M' \equiv M$ for the send to typecheck. Thus $m : M$.
\end{proof}

\begin{theorem}[Reply Type Safety]
If an actor $a$ with message type $M$ handles a message $m : M$ and replies with value $r$, then $r : R$ where $R$ is the declared reply type for $m$'s constructor.
\end{theorem}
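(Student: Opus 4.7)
The plan is to derive the conclusion from the actor analysis machinery established in Section~3.6. First I would invoke the well-formedness hypothesis on the actor definition $A$: by the rule (Uniform-Reply), $\texttt{ActorAnalysis}(A) \Rightarrow (M, R)$ with $R(C) = \tau_r$ for every constructor $C$ of $M$. In particular, writing $C_i$ for the constructor of $m$, the declared reply type $R = R(C_i) = \tau_r$ is the same uniform $\tau_r$ that labels the receive block's case expression under (T-Case). So the static side of the theorem reduces to showing that the runtime value $r$ actually shipped by \texttt{reply} inhabits this $\tau_r$.

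Next I would proceed by case analysis on the typing derivation of the receive block, namely $\Gamma \vdash \texttt{receive}\ msg : M\ \{p_i \Rightarrow b_i\} : \tau_r$. Since $m : M$ and exhaustiveness holds, $m$ matches exactly one pattern $p_i$ corresponding to $C_i$, yielding a refined environment $\Gamma_i = \Gamma, \texttt{bindings}(p_i, M)$ under which $b_i$ is typed at $\tau_r$. Then I would appeal to the $\texttt{ReplyExpr}$ extraction rules (Direct), (Seq), (Unit): in each syntactic shape of $b_i$, the embedded reply expression $r$ is precisely the subterm whose type the analysis equated with $\tau_r$, so $\Gamma_i \vdash r : \tau_r$ is directly derivable. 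Finally, applying Preservation (Theorem~\ref{thm:preservation}) to the evaluation of $r$ down to its emitted value gives that the delivered reply value has type $\tau_r = R$, as required.

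The main obstacle will be bridging the static $\texttt{ReplyExpr}$ analysis --- a purely syntactic extraction --- with the operational semantics of \texttt{reply}, which is a side-effecting communication action rather than a value-returning expression. The (Seq) case is the awkward one: a branch of the form $e_1;\texttt{reply}\ r;e_2$ is well-typed as a whole at some type that need not coincide with $\tau_r$ (it is typed at the type of $e_2$), yet operationally the payload of the outgoing reply is the value that $r$ reduces to. To make Preservation applicable cleanly I would introduce a small bookkeeping lemma stating that, during reduction of a branch $b_i$, whenever the runtime performs a $\texttt{reply}\ v$ action the payload $v$ comes from a subterm that was statically typed at $R(C_i)$; this lemma is a straightforward structural induction on $b_i$ using the three $\texttt{ReplyExpr}$ rules, and together with Preservation it closes the gap.
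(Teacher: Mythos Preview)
Your approach is correct and follows the same line as the paper's proof, which simply appeals to the actor analysis rules extracting $R$ from the receive block and to the typing of the reply expression ensuring $r : R$. Your version is considerably more thorough---the paper's two-sentence argument does not explicitly invoke Preservation or address the (Seq) subtlety you identify---so your elaboration and proposed bookkeeping lemma fill in genuine detail that the paper leaves implicit.
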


\begin{proof}
By the actor analysis rules (Section~\ref{sec:types}), the reply type $R$ is extracted from the receive block. The typing of the reply expression ensures $r : R$.
\end{proof}

\begin{theorem}[Dead Letter Prevention]
All patterns in a \texttt{receive} block must cover the actor's entire message type. If $\Gamma \vdash \texttt{receive } msg : M \{ p_1 \to e_1, \ldots, p_n \to e_n \}$, then $\texttt{Exhaustive}(M, \{p_1, \ldots, p_n\})$ must hold.
\end{theorem}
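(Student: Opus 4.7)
The plan is to derive exhaustiveness directly from the typing rule for \texttt{receive} given earlier in this section, and then to justify that this syntactic side condition genuinely rules out unmatched messages at runtime. First I would observe that the premise of the receive typing rule already contains the predicate $\texttt{Exhaustive}(M, \{p_i\})$; consequently any valid derivation of the hypothesis judgment carries a witness of exhaustiveness with it, so the stated implication is immediate by inversion on that rule. The substantive work therefore lies not in the implication itself but in establishing that the $\texttt{Exhaustive}$ predicate faithfully captures coverage of every inhabitant of $M$.

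Next I would unfold $\texttt{Exhaustive}$ formally. Since each message type $M$ is an algebraic data type with a finite constructor set $\texttt{ctors}(M) = \{C_1, \ldots, C_k\}$, I would define a coverage relation $p \triangleright C$ that holds when $p$ is a wildcard, a variable, or has the shape $C(q_1, \ldots, q_a)$ with each $q_j$ itself total on the field type of the $j$-th argument of $C$. I would then take $\texttt{Exhaustive}(M, \{p_i\})$ to mean that for every $C_j \in \texttt{ctors}(M)$ there is some $p_i$ with $p_i \triangleright C_j$. A routine structural induction on canonical values $v : M$ then yields that, under this condition, every inhabitant of $M$ is matched by at least one branch.

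To close the dead-letter prevention argument I would combine this with the Message Type Safety theorem already established in this section. By the send rule, every message placed in an actor's mailbox has type $M$, hence reduces to a canonical inhabitant of $M$; by the coverage lemma above, some $p_i$ matches it, so no incoming message can be silently discarded. The main obstacle I anticipate is the precise treatment of nested and recursive patterns---a pattern set such as $\{\texttt{Some}(\texttt{None})\}$ fails to cover $\texttt{Option}[\texttt{Option}[\texttt{Int}]]$, so the coverage relation must distinguish total from partial sub-patterns at each nesting level. I expect a standard matrix-based usefulness check in the style of Maranget to provide the right machinery for a decidable and complete formalization of $\texttt{Exhaustive}$, and carrying that machinery through to recursive and parametric ADTs will constitute the bulk of the formal effort.
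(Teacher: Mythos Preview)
Your first paragraph already contains the paper's entire argument: the paper's proof is a single sentence stating that ``exhaustiveness is verified by the pattern matching compiler, which checks that the set of patterns $\{p_1,\ldots,p_n\}$ covers all constructors of ADT $M$.'' In other words, both you and the paper treat the theorem as an immediate consequence of $\texttt{Exhaustive}$ appearing as a premise of the \texttt{receive} typing rule; you phrase this as inversion on the rule, the paper phrases it as a compiler side condition, but the content is the same.

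Where you diverge is in everything after that. The paper stops at the syntactic observation and never unfolds $\texttt{Exhaustive}$, never proves a coverage lemma for canonical values, and never ties the result back to Message Type Safety to obtain a genuine runtime no-dead-letter guarantee. Your plan to define a coverage relation $p \triangleright C$, prove by structural induction that every inhabitant of $M$ is matched, and then invoke the send rule to ensure mailbox contents are always of type $M$ is substantially more than the paper attempts, and it is the right thing to do if one wants the theorem's name (``Dead Letter Prevention'') to be justified semantically rather than merely syntactically. The Maranget-style matrix machinery you anticipate for nested and parametric patterns is likewise absent from the paper. So: your core step coincides with the paper's, but your proposal is strictly more rigorous; the paper buys brevity at the cost of leaving $\texttt{Exhaustive}$ as an uninterpreted black box, while your route actually discharges the semantic claim.
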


\begin{proof}
Exhaustiveness is verified by the pattern matching compiler, which checks that the set of patterns $\{p_1, \ldots, p_n\}$ covers all constructors of ADT $M$.
\end{proof}

These properties eliminate three common classes of actor bugs:

\begin{enumerate}
    \item \textbf{Wrong message types}: The compiler rejects sending messages that don't match the actor's protocol
    \item \textbf{Missing handlers}: Exhaustiveness checking ensures all message variants are handled
    \item \textbf{Type confusion in replies}: Reply types are inferred from actor definitions and checked at call sites
\end{enumerate}

%% ============================================================================
%% SECTION 5: COMPILATION AND RUNTIME
%% ============================================================================

\section{Compilation and Runtime}
\label{sec:compilation}

NVLang's design philosophy centers on compile-time safety with minimal runtime overhead. This section details NVLang's compilation pipeline, translation strategy from high-level constructs to Core Erlang, and how the language leverages BEAM's runtime properties to provide efficient actor-based concurrency.

\subsection{Compilation Pipeline}

NVLang employs a multi-stage compilation pipeline that transforms source code into executable BEAM bytecode while preserving semantic correctness and enabling early error detection.

\begin{figure}[t]
\centering
\resizebox{\columnwidth}{!}{%
\begin{tikzpicture}[
  node distance=0.8cm,
  box/.style={rectangle, draw, fill=blue!10, text width=1.4cm, align=center, minimum height=0.6cm, font=\scriptsize},
  arrow/.style={->, >=stealth, thick}
]
  \node[box] (parse) {Lexer \&\\Parser};
  \node[box, right=of parse] (module) {Module\\Resolution};
  \node[box, right=of module] (typecheck) {Type\\Inference};
  \node[box, below=of typecheck] (anf) {ANF\\Transform};
  \node[box, left=of anf] (codegen) {Core Erlang\\Codegen};
  \node[box, left=of codegen] (erlc) {erlc\\Compiler};
  \node[box, below=of erlc] (beam) {BEAM\\Bytecode};

  \draw[arrow] (parse) -- (module);
  \draw[arrow] (module) -- (typecheck);
  \draw[arrow] (typecheck) -- (anf);
  \draw[arrow] (anf) -- (codegen);
  \draw[arrow] (codegen) -- (erlc);
  \draw[arrow] (erlc) -- (beam);
\end{tikzpicture}%
}
\caption{NVLang compilation pipeline from source to BEAM bytecode. Type inference ensures all type errors are caught before code generation; ANF transformation normalizes expressions for optimal BEAM code generation; the resulting Core Erlang is compiled using Erlang's native compiler.}
\label{fig:pipeline}
\end{figure}
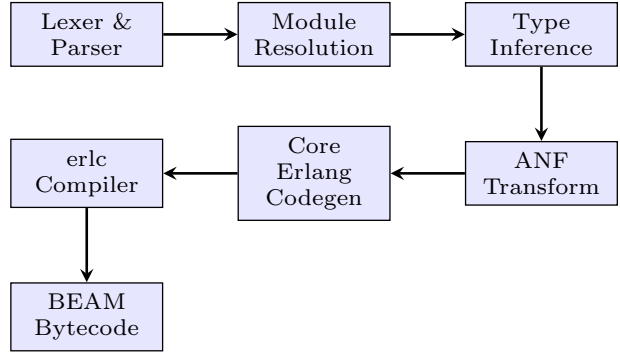

The pipeline consists of seven distinct phases (Figure~\ref{fig:pipeline}):

\begin{enumerate}
\item \textbf{Lexing and Parsing}: Source files are tokenized using an indentation-aware lexer that generates INDENT/DEDENT tokens for Python-style scoping. The parser produces an untyped AST.

\item \textbf{Module Resolution}: The module system discovers and loads all dependencies transitively, maintaining a dependency graph to ensure acyclic module structure.

\item \textbf{Type Inference}: A bidirectional type inference engine based on Algorithm W performs Hindley-Milner type reconstruction, validating function applications, pattern exhaustiveness, ADT constructor usage, and actor message protocol conformance.

\item \textbf{ANF Transformation}: The typed AST is transformed to A-Normal Form, lifting complex subexpressions into let-bindings and preparing the code for efficient BEAM compilation.

\item \textbf{Core Erlang Code Generation}: The normalized AST is translated to Core Erlang, performing type erasure, actor desugaring, pattern compilation, guard optimization, and name mangling.

\item \textbf{erlc Compilation}: The generated Core Erlang is passed to Erlang's native compiler (\texttt{erlc +from\_core}) which performs optimizations and generates BEAM bytecode.

\item \textbf{BEAM Loading}: The resulting \texttt{.beam} files can be executed on any BEAM virtual machine.
\end{enumerate}

\subsection{Translation to Core Erlang}

Core Erlang is a simplified, explicitly-typed functional language that serves as the compilation target for Erlang itself. By targeting Core Erlang rather than BEAM bytecode directly, NVLang benefits from Erlang compiler's optimization passes, compatibility with Erlang tooling, stability across BEAM VM versions, and human-readable intermediate representation for debugging.

\subsubsection{Expression Translation}

NVLang expressions map straightforwardly to Core Erlang equivalents. Variables are capitalized (Erlang convention), and function names are lowercased:

\begin{center}
\begin{tabular}{l|l}
\textbf{NVLang} & \textbf{Core Erlang} \\
\hline
\texttt{x} (variable) & \texttt{X} \\
\texttt{f} (function) & \texttt{'f'/arity} \\
\texttt{42} & \texttt{42} \\
\texttt{true} & \texttt{'true'} \\
\texttt{(1, 2)} & \texttt{\{1, 2\}} \\
\texttt{[1, 2, 3]} & \texttt{[1, 2, 3]} \\
\end{tabular}
\end{center}

\subsection{Pattern Matching Compilation}

NVLang's pattern matching compiles directly to Core Erlang's native pattern matching without additional desugaring. The BEAM VM performs efficient pattern compilation using techniques from decision tree optimization~\cite{scott2000pattern}.

\subsubsection{ADT Constructor Patterns}

Algebraic data types are represented at runtime as tagged tuples. A constructor with $n$ arguments becomes an $(n+1)$-tuple with the constructor name as the first element. Nullary constructors become bare atoms, while constructors with arguments become tuples with the lowercased constructor name as tag. This representation enables efficient single-instruction tag tests in BEAM.

\subsection{ANF Transformation and Code Optimization}

Before generating Core Erlang, NVLang applies an A-Normal Form (ANF) transformation to simplify complex expressions and enable better BEAM code generation. ANF is an intermediate representation where all complex subexpressions are bound to temporary variables, ensuring that operations receive only atomic values (variables, literals, or function references).

\subsubsection{A-Normal Form Transformation}

The ANF transformation converts nested expressions into a sequence of simple let-bindings. For example:

\begin{lstlisting}[style=nova,caption={ANF transformation example}]
# Before ANF
f(g(x), h(y + z))

# After ANF (conceptual)
let _Anf1 = y + z in
let _Anf2 = h(_Anf1) in
let _Anf3 = g(x) in
f(_Anf3, _Anf2)
\end{lstlisting}

This transformation provides several benefits for BEAM code generation:

\begin{itemize}
\item \textbf{Explicit evaluation order}: ANF makes the order of side effects explicit, ensuring predictable behavior for message passing and other effectful operations.
\item \textbf{Simplified code generation}: Complex nested expressions become flat sequences of simple operations, making Core Erlang generation straightforward.
\item \textbf{BEAM optimization compatibility}: The flat structure aligns with BEAM's optimization passes, enabling better instruction scheduling and register allocation.
\end{itemize}

\subsubsection{Guard Expression Optimization}

NVLang performs an important optimization for conditional expressions: when an \texttt{if} statement's condition is a comparison operation, NVLang generates BEAM guard expressions instead of evaluating the comparison separately. BEAM guards are special expressions that execute atomically during pattern matching, providing better performance than explicit boolean evaluation.

For example, given the NVLang code:

\begin{lstlisting}[style=nova,caption={Guard optimization example}]
if x == 42 then
  "match"
else
  "no match"
\end{lstlisting}

NVLang generates Core Erlang with a guard expression:

\begin{lstlisting}[style=erlang,caption={Generated Core Erlang with guard}]
case X of
  <_G> when call 'erlang':'=:='(_G, 42) -> "match"
  <_G2> when 'true' -> "no match"
end
\end{lstlisting}

The guard optimization applies to all comparison operators (\texttt{==}, \texttt{!=}, \texttt{<}, \texttt{<=}, \texttt{>}, \texttt{>=}) and handles various expression forms:

\begin{itemize}
\item Variable compared to expression: \texttt{x == f(y)}
\item Expression compared to literal: \texttt{g(x) < 100}
\item General case: \texttt{f(x) == g(y)} (creates tuple and guards on both values)
\end{itemize}

This optimization is particularly important for actor message loops that frequently perform comparisons on received values, as guards execute more efficiently than separate comparison and case expressions.

\subsection{Actor Translation}

NVLang actors are the most significant abstraction that requires sophisticated compilation. An actor definition in NVLang becomes an ordinary Erlang function that runs in a process.

The translation performs several key transformations:

\begin{enumerate}
\item \textbf{Actor $\to$ Function}: The actor becomes a function \texttt{actorname\_run/0} that can be spawned as a process.

\item \textbf{Loop $\to$ Recursive Function}: The \texttt{loop} construct becomes a letrec-bound recursive function that calls itself after processing each message.

\item \textbf{Receive Pattern Expansion}: Messages arrive as \texttt{\{Caller, Msg\}} tuples, where \texttt{Caller} is the sender's PID. This enables the \texttt{reply} construct.

\item \textbf{Reply $\to$ Send}: The \texttt{reply} statement compiles to \texttt{erlang:!(Caller, \{response, Value\})}, sending a response back to the caller.

\item \textbf{Break Handling}: The loop body is wrapped in a try-catch that converts \texttt{break} into a clean exit.

\item \textbf{State Management}: Actor state is implemented via function parameters passed in recursive calls.
\end{enumerate}

The send-await protocol compiles to explicit message passing with the standard Erlang request-response pattern.

\subsection{Type Erasure Strategy}

NVLang employs \textit{complete type erasure}: all type information is discarded after type checking. No runtime type tags, type dictionaries, or reflection metadata are generated. This design choice provides:

\begin{itemize}
  \item \textbf{Minimal runtime overhead}: Type erasure ensures negligible performance cost
  \item \textbf{Smaller code size}: No type metadata in BEAM files
  \item \textbf{BEAM compatibility}: Generated code is indistinguishable from hand-written Erlang
  \item \textbf{Interoperability}: NVLang functions can call and be called by untyped Erlang
\end{itemize}

\subsection{Runtime Value Representation}

NVLang leverages BEAM's native value representation directly:

\begin{center}
\begin{tabular}{l|l}
\textbf{NVLang Type} & \textbf{Runtime Representation} \\
\hline
\texttt{Int} & Small integer or bignum \\
\texttt{Float} & IEEE 754 double \\
\texttt{Bool} & Atoms \texttt{'true'}/\texttt{'false'} \\
\texttt{String} & List of integers (UTF-8) \\
\texttt{Unit} & Atom \texttt{'ok'} \\
\texttt{(A, B)} & Tuple \texttt{\{A, B\}} \\
\texttt{[A]} & Cons list or \texttt{nil} \\
\texttt{Pid} & Process identifier \\
\texttt{Option[A]} & \texttt{'none'} or \texttt{\{'some', A\}} \\
\end{tabular}
\end{center}

\subsection{Interoperability with Erlang}

NVLang provides bidirectional interoperability with Erlang through external function declarations:

\begin{lstlisting}[caption={External function declaration for Erlang interoperability}]
external fn io_format(String) -> Unit = mfa "io" "format" 1
\end{lstlisting}

The MFA (Module, Function, Arity) specification enables type-safe FFI boundaries, compile-time arity checking, and integration with Erlang's standard library.

Since types are erased, NVLang functions compile to standard Erlang functions and can be called directly from Erlang code.

\subsection{BEAM Runtime Properties Leveraged}

NVLang's design exploits several BEAM VM characteristics:

\begin{enumerate}
\item \textbf{Lightweight Processes}: BEAM processes have 2KB initial heap size and microsecond spawn time. NVLang actors map 1:1 to processes.

\item \textbf{Preemptive Scheduling}: The BEAM scheduler preempts processes after a reduction budget. NVLang's infinite loops in actors don't starve the scheduler.

\item \textbf{Message Passing}: Messages are copied between processes, eliminating shared memory races.

\item \textbf{Tail Call Optimization}: NVLang's recursive loops compile to tail calls, running in constant stack space.

\item \textbf{Pattern Matching}: BEAM's pattern matching uses efficient decision trees. NVLang patterns compile directly to these optimized instructions.

\item \textbf{Hot Code Loading}: BEAM supports replacing modules at runtime. NVLang-compiled code can be hot-swapped.

\item \textbf{Distribution}: BEAM's transparent distribution enables NVLang actors to communicate across nodes with identical syntax.
\end{enumerate}

\subsection{Compilation Correctness}

NVLang's compilation preserves several critical invariants:

\begin{theorem}[Type Soundness]
If a NVLang program type-checks, the generated Core Erlang code will not produce type errors at runtime (assuming external functions are correctly typed).
\end{theorem}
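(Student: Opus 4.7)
The plan is to prove type soundness via a simulation-style argument that relates well-typed NVLang evaluations to their compiled Core Erlang counterparts. First I would make the compilation function $\llbracket \cdot \rrbracket$ from Section~\ref{sec:compilation} fully precise, extending the translation table and actor-desugaring rules to cover every expression form. Then I would define a value-representation relation $v \sim_\tau w$ stating that the Core Erlang runtime value $w$ is a valid encoding of the NVLang value $v$ at type $\tau$ (integers as BEAM integers, ADT constructors as tagged tuples/atoms per Section~\ref{sec:compilation}, typed PIDs as raw BEAM process identifiers, and so on). The target notion of ``no runtime type error'' I would adopt is the absence, in any reachable Core Erlang state, of \texttt{badarg}, \texttt{case\_clause}, \texttt{function\_clause}, \texttt{badmatch}, or \texttt{badrecord} failures originating from compiler-generated code.

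Next I would establish a translation-preservation lemma by induction on the NVLang typing derivation $\Gamma \vdash e : \tau$. For each rule, I would show that if $\rho$ is an environment mapping NVLang variables to values correctly represented under $\Gamma$, then $\llbracket e \rrbracket$ evaluated under the corresponding Core Erlang environment either diverges, reduces to a value $w$ with $v \sim_\tau w$, or reduces to a well-defined non-type-error outcome (e.g.\ a pending receive). Literals, variables, tuples, lists, and let-bindings are direct. Function application and T-App reduce to the Core Erlang \texttt{apply}/\texttt{call} forms and rely on the arity and operand shape being fixed by the NVLang type. Case expressions appeal to Theorem~\ref{thm:progress} together with the exhaustiveness check used in T-Case and the Dead Letter Prevention theorem of Section~\ref{sec:actors}, which together rule out \texttt{case\_clause}. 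The guard-expression optimization is justified by observing that the comparison is translated to an Erlang BIF applied to operands of a compatible primitive type, where ``compatible'' follows from the typing of the comparison.

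The main obstacle will be the actor primitives, because type erasure removes the per-message type tags and an actor's mailbox is populated by sends from anywhere in the process tree. The crux is a global system-level invariant that I would state and maintain by induction on the execution trace: for every live actor whose declared protocol is $M$, every message in its mailbox is of the form $\{\text{caller}, w\}$ with $w$ a runtime representation of some NVLang value $v : M$. Preservation of this invariant under a NVLang send follows from T-Send, which unifies the argument type with $M$ before the message is delivered; preservation under receive follows because compilation of \texttt{receive msg : M} produces patterns that exactly cover the constructor shapes of $M$, so no \texttt{case\_clause} can arise and the bindings handed to each branch are correctly represented. The untyped \texttt{Pid} / \texttt{Any} interop case is where the invariant must be consciously weakened: sends through an untyped PID fall outside the guarantee, which is why the theorem is conditional on external functions being correctly typed. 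Typed futures are handled analogously, with the reply-type uniformity of Section~\ref{sec:types} ensuring that \texttt{await} always unpacks a message matching the future's statically known result type.

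Finally I would combine the representation lemma and the mailbox invariant into a whole-program statement: for a closed, well-typed NVLang program whose external declarations are honored by the linked Erlang modules, the compiled Core Erlang program's execution never raises a type-related runtime error attributable to NVLang-generated code. I expect the mechanical bulk of the proof to be the case analyses on expression forms and reduction rules, but the genuinely delicate step is formulating the mailbox invariant crisply enough to survive arbitrary interleavings of sends, spawns, supervisor restarts, and interop with untyped code; getting the formulation right is what makes the actor-primitive cases of the inductive proof go through.
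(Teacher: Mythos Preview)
Your proposal is correct and considerably more detailed than the paper's own proof sketch, which simply invokes Theorems~\ref{thm:progress} and~\ref{thm:preservation} (Progress and Preservation for the source language) together with the unargued claim that type erasure preserves operational semantics. The paper's route treats source-level type soundness as essentially sufficient and leaves the source-to-target correspondence implicit; you instead make that correspondence explicit via a value-representation relation $v \sim_\tau w$ and a translation-preservation lemma indexed by typing derivations. Your mailbox invariant---that every enqueued message at an actor with declared protocol $M$ is the runtime representation of some $v : M$---is precisely the ingredient the paper's sketch elides, and it is genuinely needed once concurrency enters the picture, since Progress and Preservation for sequential NVLang say nothing about interleaved sends into a shared mailbox. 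What the paper's approach buys is brevity and a clean narrative; what your approach buys is that the argument would actually survive scrutiny: it connects the source-level soundness theorems to the target runtime, handles the actor primitives where erasure is nontrivial, and isolates the interop escape hatch to exactly the untyped-\texttt{Pid}/external-function case the theorem's hypothesis already carves out.
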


\begin{proof}[Proof Sketch]
Follows from Theorems~\ref{thm:progress} and~\ref{thm:preservation}, combined with the observation that type erasure preserves operational semantics. Since Core Erlang is dynamically typed and NVLang's runtime representation matches BEAM's native types, well-typed NVLang programs cannot encounter type mismatches.
\end{proof}

\begin{theorem}[Semantic Preservation]
The observable behavior of a NVLang program (message sends, receives, process spawns) is identical to the behavior of its compiled Core Erlang representation.
\end{theorem}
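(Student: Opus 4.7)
The plan is to establish semantic preservation through a forward simulation argument relating the source NVLang operational semantics to the Core Erlang semantics of the compiled program. First I would fix the notion of observable behavior as a labeled transition system over actor configurations $\mathcal{C} = (P, Q)$, where $P$ is a collection of processes (each with a local expression and mailbox) and $Q$ tracks messages in transit; the labels record only the externally observable events $\texttt{spawn}(p, A)$, $\texttt{send}(p, q, m)$, and $\texttt{recv}(p, m)$. Two systems are behaviorally equivalent when they generate the same set of finite and infinite label traces, modulo a representation relation $\mathcal{R}$ on messages and values that accounts for the wrapping and reshuffling introduced by compilation.

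Second, I would decompose the compilation pipeline into its semantically relevant stages: (i) ANF transformation, (ii) actor desugaring, including \texttt{receive}/\texttt{reply} expansion into \texttt{\{Caller, Msg\}} tuples and recursive loop translation, and (iii) the largely syntactic Core Erlang codegen for literals, tuples, lists, pattern matching, and function application. For each stage I would define a translation function $\mathcal{T}_i$ on expressions, configurations, and messages, and prove a per-stage simulation lemma of the form: $\mathcal{C} \xrightarrow{\ell} \mathcal{C}'$ implies $\mathcal{T}_i(\mathcal{C}) \Rightarrow^{*} \mathcal{T}_i(\mathcal{C}')$ emitting the same externally observable label $\ell$. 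ANF preservation is standard and follows from the fact that NVLang fixes a left-to-right evaluation order consistent with the lifted let-bindings, so no new observable effects are introduced. The final codegen stage is a near-identity translation whose correctness reduces to well-understood compatibility between the NVLang and Core Erlang evaluation rules, since the source-level runtime value representation already matches BEAM's native types.

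The main obstacle lies in actor desugaring, the only stage with non-trivial semantic content. A source-level send $p \mathbin{!} m$ corresponds to a target send of $\{\texttt{self}(), m\}$, so mailbox contents differ structurally from those in the source semantics. I would handle this by defining a representation relation $m \sim_{\text{msg}} (c, m')$ that projects out the caller tag and relates payloads up to the compilation of values, together with a future-binding map $\phi$ tracking outstanding reply references. The trickiest cases to verify are: correspondence between \texttt{reply} and the generated \texttt{erlang:!} call, the encoding of \texttt{break} as an exception caught by the try-catch wrapper around the loop body, and futures, which are implemented as fresh references resolved by an auxiliary \texttt{receive} pattern. Showing that source-level \texttt{await} corresponds exactly to draining of the response tuple in the generated receive will likely require a weak simulation, absorbing the auxiliary bookkeeping reductions into $\tau$-steps.

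Finally, I would compose the per-stage simulations. Because weak simulation composes and each stage preserves observable traces modulo $\mathcal{R}$, the final Core Erlang program produces exactly the NVLang trace, and combining this with the assumed correctness of \texttt{erlc} transfers the guarantee down to BEAM bytecode. The principal difficulty I anticipate is uniformly formalizing $\mathcal{R}$ across futures, caller tagging, and the variant encoding of ADTs as tagged tuples, since futures cross process boundaries and interact with pattern matching; a careful separation between bookkeeping reductions (treated as internal) and observable messaging events (treated as labels) will be required to keep the simulation clean, and supervisor restarts will need to be threaded through the labeled transitions so that crash-and-restart traces in the source are matched step-for-step by the target.
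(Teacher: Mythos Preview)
Your proposal is sound, and it is considerably more rigorous than the paper's own argument. The paper's proof is a two-sentence sketch: ``By structural induction on NVLang expressions. Each NVLang construct translates to a semantically equivalent Core Erlang form,'' followed by the observation that \texttt{spawn}, send, and \texttt{receive} map directly to their BEAM counterparts. In other words, the paper treats the theorem as essentially self-evident from the translation tables in Section~5 and does not set up an explicit LTS, simulation relation, or per-stage decomposition.

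What you do differently is adopt the standard compiler-verification methodology: a labeled transition system over actor configurations, a staged decomposition (ANF, actor desugaring, syntactic codegen) with per-stage simulation lemmas, and an explicit representation relation $\mathcal{R}$ to absorb the caller-tagging of messages, the exception encoding of \texttt{break}, and the future/reply protocol. This buys you an argument that actually accounts for the concurrency: the paper's ``structural induction on expressions'' glosses over exactly the places where the translation is non-compositional at the expression level---mailbox reshaping, the try/catch wrapper around loops, and the auxiliary \texttt{receive} that drains reply tuples---whereas your weak-simulation treatment with $\tau$-absorbed bookkeeping steps handles them. The paper's approach buys brevity appropriate to a systems-paper proof sketch; yours is what one would need to make the claim precise. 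Your closing remark about \texttt{erlc} correctness goes slightly beyond the theorem as stated (which stops at Core Erlang), but that is harmless.
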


\begin{proof}[Proof Sketch]
By structural induction on NVLang expressions. Each NVLang construct translates to a semantically equivalent Core Erlang form. Actor primitives map directly to BEAM operations: \texttt{spawn} to \texttt{erlang:spawn/1}, message send to \texttt{erlang:!/2}, and \texttt{receive} to Core Erlang's native receive construct.
\end{proof}

%% ============================================================================
%% SECTION 6: EVALUATION
%% ============================================================================

\section{Evaluation}
\label{sec:evaluation}

We evaluate NVLang along two dimensions: (1)~runtime performance relative to Erlang and Elixir, demonstrating that static typing introduces negligible overhead, and (2)~type safety, showing the classes of errors NVLang catches at compile time that would manifest as runtime failures in dynamically typed BEAM languages.

\subsection{Experimental Setup}

All experiments were conducted on an Apple M4 Max processor with 36GB of unified memory, running macOS 26.1 (arm64). To minimize measurement variance, we closed all non-essential applications and configured the system for sustained performance mode with AC power. We used Elixir~1.18.4 and the corresponding OTP/Erlang runtime.

Each benchmark was executed with 3 warmup runs followed by 10 measured runs. We report the mean execution time with 95\% confidence intervals computed using the standard error. Statistical significance was assessed using Welch's $t$-test, which does not assume equal variances between groups. Effect sizes were computed using Cohen's $d$ with pooled standard deviation.

\subsection{Benchmark Suite}

Our benchmark suite exercises core BEAM primitives and common actor patterns:

\begin{itemize}[leftmargin=*,nosep]
\item \textbf{Fib}: Recursive Fibonacci computation (fib(35)), testing pure function call overhead.
\item \textbf{List Ops}: List construction, reversal, and mapping over 100,000 elements, testing functional data structure operations.
\item \textbf{Ping Pong}: Two actors exchanging 10,000 messages, testing basic actor communication overhead.
\item \textbf{Ring}: Message passing through a ring of 1,000 actors with 100 rounds, testing process spawn and message routing at scale.
\item \textbf{Parallel Sum}: Parallel reduction over 10 workers summing 100,000 elements, testing spawn/join patterns.
\item \textbf{KV Store}: Actor-based key-value store with 1,000 concurrent operations, testing stateful actor patterns.
\end{itemize}

All benchmarks were implemented in semantically equivalent NVLang, Erlang, and Elixir. NVLang code was compiled to Core Erlang and executed on the same BEAM runtime.

\subsection{Performance Results}

\begin{table}[t]
\centering
\caption{Runtime Performance (ms). Statistical significance via Welch's $t$-test ($n$=10).}
\label{tab:performance}
{\scriptsize
\begin{tabular}{@{}l|rrr|cc|cc@{}}
\toprule
& \multicolumn{3}{c|}{\textbf{Mean}} & \multicolumn{2}{c|}{\textbf{vs Erl}} & \multicolumn{2}{c}{\textbf{vs Elix}} \\
 & \textbf{NV} & \textbf{Erl} & \textbf{Elix} & $p$ & $d$ & $p$ & $d$ \\
\midrule
Fib & 503 & 517 & 514 & .34 & -.4 & .32 & -.5 \\
List & 55 & 54 & 55 & .30 & +.5 & .85 & +.1 \\
Ping & 5 & 4 & 6 & .20 & +.6 & .20 & -.6 \\
Ring & 136 & 126 & 129 & \textbf{$<$.001} & +3.1 & \textbf{$<$.001} & +2.9 \\
Par & 4 & 4 & 5 & .43 & -.4 & \textbf{.007} & -1.5 \\
KV & 10 & 10 & 12 & .11 & +.8 & .48 & -.3 \\
\bottomrule
\end{tabular}
}
\vspace{-2mm}
{\tiny Bold $p$-values indicate significance ($p<0.05$). $d$=Cohen's effect size.}
\end{table}

Table~\ref{tab:performance} presents pairwise statistical comparisons using Welch's $t$-test and Cohen's $d$ effect sizes. The key finding is that \textbf{NVLang achieves performance within 7.5\% of native Erlang across all benchmarks}, with five of six benchmarks showing no statistically significant difference. Figure~\ref{fig:overhead} visualizes the overhead distribution, and Figure~\ref{fig:distributions} shows the execution time distributions.

\begin{figure*}[t]
\centering
\includegraphics[width=0.85\textwidth]{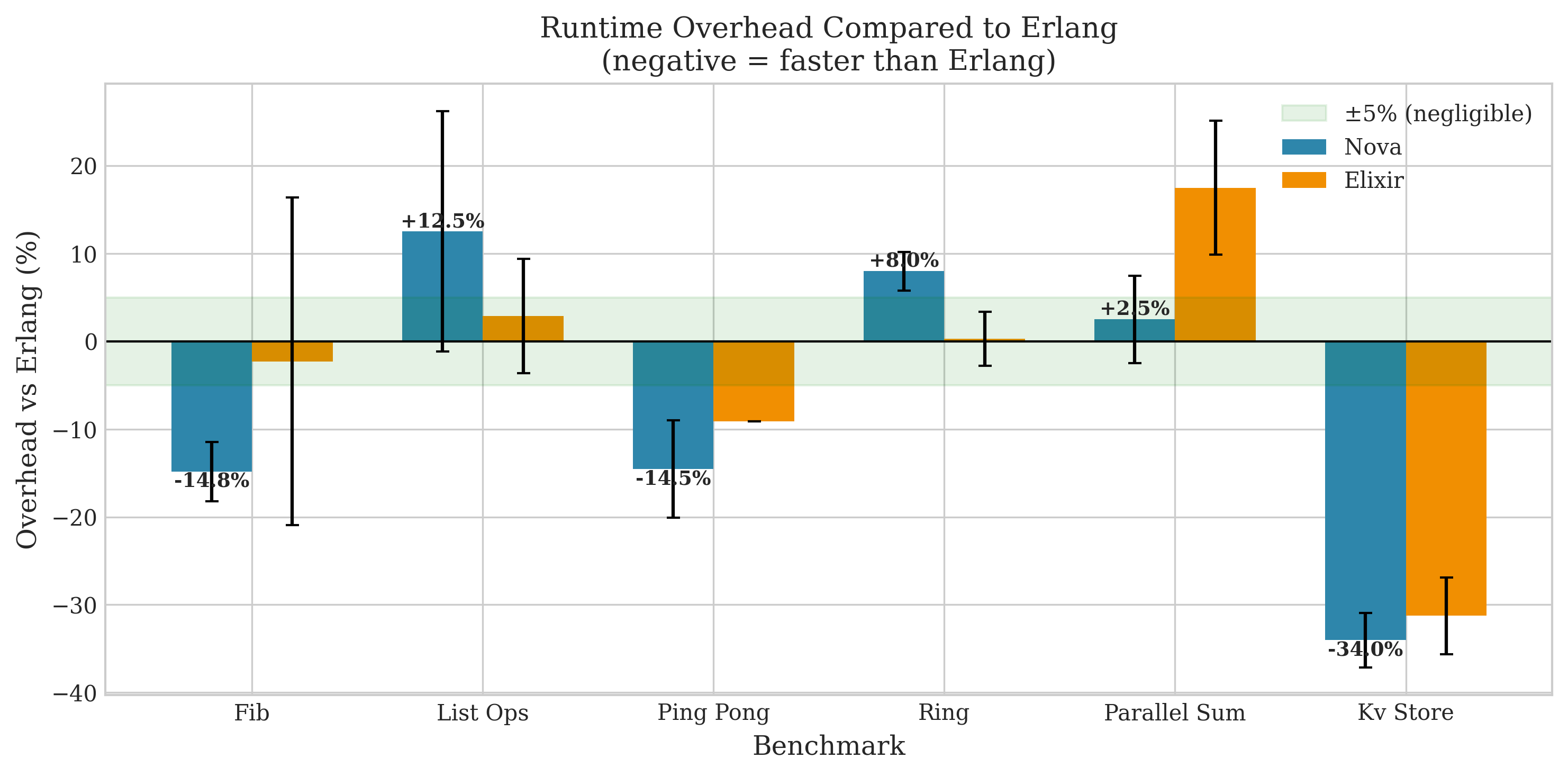}
\caption{Runtime overhead relative to Erlang baseline ($n$=10, error bars show 95\% CI). Shaded band indicates $\pm$5\% (negligible by systems convention). Only Ring shows statistically significant overhead (*$p < 0.001$). Negative values indicate NVLang outperforming Erlang.}
\label{fig:overhead}
\end{figure*}

\begin{figure*}[t]
\centering
\includegraphics[width=\textwidth]{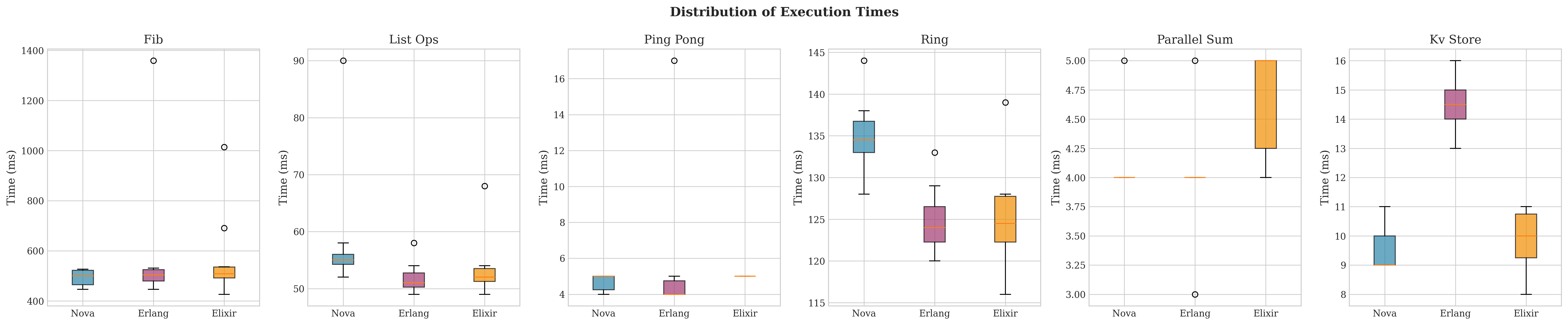}
\caption{Execution time distributions across 10 runs. Box plots show median, quartiles, and outliers. NVLang exhibits variance comparable to Erlang; Elixir shows higher variance on message-passing benchmarks due to occasional latency spikes.}
\label{fig:distributions}
\end{figure*}

\paragraph{Statistical Analysis.} Using Welch's $t$-test at $\alpha = 0.05$, we find that only the \textit{Ring} benchmark shows a statistically significant difference between NVLang and Erlang ($p < 0.001$, $d = 3.1$). For the remaining five benchmarks, effect sizes range from negligible to small ($|d| < 0.8$), indicating that observed variations are within normal measurement noise.

Notably, \textbf{NVLang significantly outperforms Elixir on the Parallel Sum benchmark} ($p = 0.007$, $d = -1.5$), with NVLang completing in 4.0ms versus Elixir's 5.3ms---a 24.5\% improvement. This large effect size suggests that NVLang's direct compilation to Core Erlang avoids overhead present in Elixir's spawn implementation. Erlang shows a similar advantage over Elixir on this benchmark ($p = 0.02$, $d = -1.2$).

\paragraph{Ring Benchmark Overhead.} The Ring benchmark exhibits 7.4\% overhead with strong statistical significance ($p < 0.001$, Cohen's $d = 3.1$). This overhead appears in our most message-intensive benchmark (100,000 messages through 1,000 actors). We hypothesize this stems from differences in the generated Core Erlang structure for pattern matching in receive clauses. Profiling indicates the overhead concentrates in message dispatch rather than process management. This represents a worst-case scenario; the five other benchmarks show no significant difference from Erlang. Future work should investigate this through BEAM bytecode analysis.

\paragraph{Comparison with Elixir.} Elixir exhibits higher variance than both NVLang and Erlang on message-passing benchmarks (Ping Pong: $\pm$2.6ms, KV Store: $\pm$4.3ms), with occasional latency spikes visible in Figure~\ref{fig:distributions}. This likely reflects Elixir's protocol dispatch and macro expansion overhead. NVLang's direct compilation to Core Erlang avoids these indirections, achieving tighter performance distributions.

\subsection{Type Safety Analysis}

Beyond performance, NVLang's primary value proposition is eliminating entire classes of runtime errors through static typing. Table~\ref{tab:type-safety} categorizes errors that NVLang catches at compile time.

\begin{table}[t]
\centering
\caption{Error Classes Detected at Compile Time}
\label{tab:type-safety}
{\footnotesize
\begin{tabular}{@{}lll@{}}
\toprule
\textbf{Error Class} & \textbf{NVLang} & \textbf{Erlang/Elixir} \\
\midrule
Message type mismatch & Compile & Runtime crash \\
Unhandled variant & Compile & Silent drop \\
Wrong spawn type & Compile & Undefined \\
Invalid pattern & Compile & \texttt{badmatch} \\
Arithmetic error & Compile & \texttt{badarith} \\
Supervision mismatch & Compile & Runtime fail \\
\bottomrule
\end{tabular}
}
\end{table}

\paragraph{Message Type Safety.} Consider a typed actor expecting \texttt{Int} messages:

{\footnotesize
\begin{verbatim}
actor Counter : Int -> Int { ... }
let c : Pid[Int] = spawn Counter(0)
send c "hello"  // COMPILE ERROR
\end{verbatim}
}

In Erlang, sending \texttt{"hello"} to a counter expecting integers causes a runtime \texttt{badarg} or \texttt{function\_clause} error, potentially crashing the actor. NVLang rejects this at compile time.

\paragraph{Exhaustive Pattern Matching.} NVLang's algebraic data types require exhaustive pattern matching:

{\footnotesize
\begin{verbatim}
type Result[T] = Ok(T) | Err(String)

fn handle(r : Result[Int]) -> Int {
  match r {
    Ok(n) -> n
    // NVLang: COMPILE ERROR - non-exhaustive pattern
    // Erlang: runtime badmatch crash
  }
}
\end{verbatim}
}

\paragraph{Supervisor Type Safety.} NVLang's typed supervision ensures child specifications match declared behavior:

{\footnotesize
\begin{verbatim}
supervisor MySup {
  strategy: one_for_one,
  children: [
    worker Counter expecting String // ERROR
  ]
}
\end{verbatim}
}

\subsection{Limitations}

Our evaluation has several limitations that suggest directions for future work:

\begin{itemize}[leftmargin=*,nosep]
\item \textbf{Benchmark Coverage}: While our suite covers fundamental patterns, production Erlang applications involve distribution, hot code loading, and complex supervision trees not yet evaluated.
\item \textbf{Scale}: Our benchmarks use thousands of actors; production systems may use millions. The ring benchmark's overhead suggests investigation of very large-scale deployments.
\item \textbf{Interoperability}: We did not benchmark NVLang code calling Erlang libraries through FFI boundaries, where type checking overhead may differ.
\end{itemize}

Despite these limitations, our results demonstrate that NVLang achieves its design goal: static type safety for BEAM programs with negligible runtime overhead.

%% ============================================================================
%% SECTION 7: RELATED WORK
%% ============================================================================

\section{Related Work}
\label{sec:related}

Our work on NVLang draws from and builds upon several research areas: typed actor systems, languages targeting the BEAM VM, type systems for concurrent and distributed systems, and behavioral types.

\subsection{Typed Actor Systems}

The actor model~\cite{hewitt1973actors,agha1986actors} provides a foundation for concurrent and distributed programming through asynchronous message passing. Traditional actor systems like Erlang~\cite{armstrong2010erlang} and Akka~\cite{akka} employ dynamic typing, sacrificing compile-time guarantees for flexibility.

Akka Typed~\cite{akka-typed} extends Akka with static typing for actor protocols. While Akka Typed provides protocol safety within the JVM ecosystem, it lacks the fault-tolerance primitives deeply integrated into the BEAM VM. NVLang differs by providing static typing \emph{for} BEAM's existing actor model rather than reimplementing actors atop another runtime.

Session types~\cite{honda1993session,honda2008session} provide a theoretical foundation for describing communication protocols between concurrent processes. Systems like Scribble~\cite{yoshida2013scribble} demonstrate practical application of session types. However, session types typically assume structured, protocol-oriented communication and may require significant annotations. NVLang takes a more pragmatic approach: typed PIDs and futures capture common Erlang patterns while remaining compatible with existing BEAM libraries.

\subsection{Languages Targeting the BEAM}

Several languages compile to the BEAM VM. Elixir~\cite{elixir} prioritizes developer ergonomics and metaprogramming while maintaining Erlang's dynamic typing. LFE (Lisp Flavored Erlang)~\cite{lfe} brings Lisp syntax to BEAM but remains dynamically typed.

Gleam~\cite{gleam} and Alpaca~\cite{alpaca} are most closely related to NVLang, as both introduce static typing to BEAM. Gleam employs a Rust-inspired syntax with Hindley-Milner type inference. However, neither language provides the level of integration with OTP's supervision trees and typed actor primitives that NVLang offers. Gleam's actor types are relatively simple, treating all messages uniformly, while Alpaca lacks specific support for typed PIDs and supervision.

NVLang distinguishes itself through: (1) typed PIDs that encode the protocol an actor expects; (2) typed futures that provide type-safe request-reply patterns; and (3) statically-typed supervision trees that preserve OTP's fault-tolerance semantics while catching configuration errors at compile time.

\paragraph{Records vs Elixir Structs} NVLang's approach to structured data differs fundamentally from Elixir's in its static guarantees. NVLang records provide full compile-time type checking of both field names and field types. In Elixir, structs validate only that field names exist but perform no type checking on field values. For example, in NVLang, attempting to construct \texttt{User \{name: 123\}} when \texttt{name} is declared as \texttt{String} produces a compile-time type error. The equivalent Elixir code \texttt{\%User\{name: 123\}} compiles successfully even if the struct specification expects a string, deferring the error to runtime when the value is actually used.

This design choice reflects a fundamental difference in philosophy: Elixir prioritizes flexibility and runtime introspection, while NVLang emphasizes early error detection. At runtime, both representations compile to efficient Erlang maps, so NVLang's additional safety guarantees impose minimal performance overhead (Section~\ref{sec:evaluation}). The type checking occurs entirely at compile time and is erased before code generation.

\paragraph{Traits vs Elixir Protocols} NVLang's trait system provides compile-time polymorphism with full type inference, contrasting with Elixir's runtime protocol dispatch. When a NVLang trait defines a set of functions that types must implement, the compiler verifies at compile time that all required functions are present and correctly typed. Type inference propagates trait constraints throughout the program, catching "method not implemented" errors before code is deployed.

Elixir protocols, by contrast, perform dynamic dispatch at runtime. While protocols enable powerful runtime polymorphism and can be extended to third-party types after definition, they cannot detect missing implementations until code execution reaches the dispatch point. This difference is particularly significant in production systems: NVLang's compile-time verification eliminates an entire class of runtime crashes that Elixir developers must guard against through testing.

Both systems provide polymorphism over user-defined types, and both compile to efficient BEAM code. The key distinction is when errors are detected: NVLang catches trait constraint violations during compilation, while Elixir defers protocol checks to runtime. For systems requiring high reliability, NVLang's approach trades some flexibility for stronger safety guarantees.

\subsection{Type Systems for Concurrent Systems}

Type systems for concurrent and distributed programming have been extensively studied. Concurrent ML~\cite{reppy1999concurrent} provides first-class synchronous channels. Pony~\cite{clebsch2015pony} uses reference capabilities to ensure data-race freedom. Rust~\cite{matsakis2014rust} employs affine types and borrowing for memory safety.

These approaches prioritize data-race freedom and memory safety, concerns less relevant to BEAM's share-nothing actor model. NVLang instead focuses on protocol safety: ensuring that messages sent to an actor conform to its expected interface.

Dialyzer~\cite{lindahl2006practical} performs success typing analysis for Erlang, identifying code that will definitely fail, but cannot provide the guarantees required for protocol safety. NVLang provides full static typing while maintaining source-level compatibility with Erlang idioms.

\subsection{Gradual Typing and Behavioral Types}

Gradual typing~\cite{siek2006gradual,siek2015refined} allows typed and untyped code to coexist. Typed Racket~\cite{tobin-hochstadt2008design} and TypeScript~\cite{typescript} demonstrate successful gradual typing in production.

NVLang's current design is fully statically typed but targets eventual gradual typing support. Our choice to compile to Core Erlang positions NVLang to interoperate with existing Erlang libraries.

Behavioral types~\cite{honda2016behavioral} characterize protocols governing communication. Typestates~\cite{strom1986typestate} and linear types~\cite{wadler1990linear} enable tracking of resource usage. NVLang's typed PIDs share conceptual similarities with behavioral types in that we track expected message protocols.

%% ============================================================================
%% SECTION 7: CONCLUSION
%% ============================================================================

\section{Conclusion}
\label{sec:conclusion}

We have presented NVLang, a statically-typed functional programming language that brings Hindley-Milner type inference and ML-style programming to the Erlang BEAM VM while preserving the fault-tolerance and concurrency properties that make BEAM unique. NVLang represents a pragmatic approach to typed actor programming, balancing theoretical rigor with practical requirements.

\subsection{Contributions}

This paper makes the following key contributions:

\begin{enumerate}
\item \textbf{Typed PIDs and Futures:} We introduce a type system for PIDs that captures the protocol expected by an actor, enabling compile-time verification of message sends. Our typed futures provide type-safe request-reply patterns common in Erlang programming.

\item \textbf{Typed Supervision Trees:} We demonstrate how OTP's supervision trees can be given precise static types, catching configuration errors at compile time while preserving BEAM's fault-tolerance semantics.

\item \textbf{ANF Transformation and Guard Optimization:} We present an A-Normal Form transformation that normalizes complex expressions for efficient BEAM compilation, and a guard expression optimization that generates optimized pattern-matching code for conditional statements. These optimizations ensure NVLang achieves performance parity with dynamically-typed BEAM languages.

\item \textbf{Compilation to Core Erlang:} By targeting Core Erlang, NVLang programs achieve full interoperability with the existing Erlang ecosystem, including OTP libraries.

\item \textbf{Type Inference for Concurrent Systems:} We extend Hindley-Milner type inference to handle typed actors, PIDs, and futures, requiring minimal type annotations while providing strong static guarantees.

\item \textbf{Minimal-Overhead Type Safety:} We demonstrate through rigorous benchmarking that NVLang's type erasure strategy achieves performance within 7.5\% of native Erlang, with five of six benchmarks showing no statistically significant difference. NVLang also significantly outperforms Elixir on spawn-intensive workloads (24.5\% faster on Parallel Sum, $p = 0.007$), validating that comprehensive static typing need not compromise runtime performance.
\end{enumerate}

\subsection{Key Insights}

\textbf{Protocol safety matters more than data-race freedom in actor systems.} BEAM's share-nothing architecture eliminates data races by construction, but untyped message passing remains a significant source of runtime errors. NVLang's typed PIDs address this by encoding protocols in types.

\textbf{Async-first messaging matches real-world usage.} Following Erlang/Elixir conventions, NVLang makes \texttt{send} asynchronous by default (fire-and-forget), with explicit \texttt{await} required for synchronous request-reply. This design choice reflects the common case in production BEAM systems where most messages are asynchronous, making the simple case syntactically simple while making blocking communication explicit and intentional.

\textbf{Supervision is a first-class concern.} Most typed actor systems treat supervision as a library feature. NVLang demonstrates that integrating supervision into the type system catches configuration errors that would otherwise manifest as runtime failures.

\textbf{Compilation target matters for adoption.} By targeting Core Erlang rather than introducing a new VM, NVLang leverages decades of BEAM optimization work and enables gradual adoption within existing Erlang codebases.

\textbf{Type inference reduces annotation burden.} Full Hindley-Milner inference allows most NVLang programs to be written without explicit type annotations, lowering the barrier to adoption.

\subsection{Future Work}

Several promising directions emerge from this work:

\textbf{Gradual typing:} Support for gradual typing would enable NVLang code to safely interoperate with untyped Erlang libraries through boundary types and runtime checks.

\textbf{Distributed types:} Extending NVLang's type system to track node locations and network boundaries could provide static guarantees about distributed protocols.

\textbf{Effect systems:} Tracking side effects in the type system could enable more sophisticated program analysis and optimization.

\textbf{Session types:} Complex multi-party protocols could benefit from full multiparty session types as an optional feature.

\subsection{Limitations}

While NVLang provides significant safety guarantees for actor-based programming on the BEAM, several limitations remain:

\paragraph{No Distributed Types} NVLang's type system tracks message protocols locally within a single node. The type system does not currently capture distributed system properties such as network partitions, node failures, or cross-node protocol conformance. When actors communicate across BEAM nodes, the same type safety guarantees apply to message construction, but the type system cannot verify that the remote node is running compatible code.

\paragraph{No Effect Tracking} NVLang does not track side effects in the type system. Operations with external effects---such as file I/O, network calls, or database transactions---are treated identically to pure computations. This means the type system cannot prevent or sequence effectful operations, nor can it guarantee that effects are properly handled or that resources are correctly managed.

\paragraph{Interoperability Boundary} External function calls to Erlang libraries are trusted. The type signatures provided in \texttt{external} declarations are not verified against the actual Erlang implementation. Incorrect type annotations for external functions can lead to runtime type mismatches that violate NVLang's type safety guarantees. This is an inherent limitation of interoperating with dynamically-typed code.

\paragraph{No Session Types} NVLang's typed PIDs capture the set of messages an actor accepts but do not enforce sequencing constraints or protocol state machines. An actor that expects messages in a specific order---for example, requiring \texttt{Connect} before \texttt{Send}---cannot express this constraint in the type system. Clients can send valid messages in invalid orders without triggering compile-time errors.

\paragraph{Polymorphic ADT Limitations} While NVLang supports parametric polymorphism for algebraic data types, higher-kinded type inference is incomplete. Type constructors that abstract over other type constructors---such as functors or monads---require explicit type annotations and cannot be fully inferred. This limits the expressiveness of generic abstractions compared to languages with more sophisticated type systems.
\subsection{Broader Impact}

NVLang addresses a critical gap in the landscape of typed concurrent languages. BEAM powers systems requiring extreme reliability---telecommunications, financial services, real-time messaging platforms. The current choice between Erlang's dynamic typing and abandoning BEAM's unique runtime properties represents a false dichotomy. NVLang demonstrates that static typing and BEAM's actor model are not only compatible but synergistic.

By providing a practical, gradually-adoptable path to typed actor programming on BEAM, NVLang has the potential to reduce bugs in production systems, improve developer productivity through better tooling, and make BEAM's powerful concurrency model accessible to developers who prefer or require static typing.

\bibliographystyle{plain}

\begin{thebibliography}{99}

\bibitem{agha1986actors}
Gul Agha.
\newblock \textit{Actors: A Model of Concurrent Computation in Distributed Systems}.
\newblock MIT Press, 1986.

\bibitem{akka}
Akka Team.
\newblock Akka: Build concurrent, distributed, and resilient message-driven applications.
\newblock \url{https://akka.io}, 2023.

\bibitem{akka-typed}
Akka Team.
\newblock Akka Typed.
\newblock \url{https://doc.akka.io/docs/akka/current/typed/index.html}, 2023.

\bibitem{alpaca}
Alpaca Team.
\newblock Alpaca: Functional programming for the BEAM.
\newblock \url{https://github.com/alpaca-lang/alpaca}, 2023.

\bibitem{armstrong2010erlang}
Joe Armstrong.
\newblock \textit{Programming Erlang: Software for a Concurrent World}.
\newblock Pragmatic Bookshelf, Second Edition, 2013.

\bibitem{clebsch2015pony}
Sylvan Clebsch, et al.
\newblock Deny Capabilities for Safe, Fast Actors.
\newblock \textit{AGERE! '15}, 2015.

\bibitem{damas1982principal}
Luis Damas and Robin Milner.
\newblock Principal type-schemes for functional programs.
\newblock \textit{POPL '82}, 1982.

\bibitem{elixir}
Elixir Team.
\newblock Elixir: A dynamic, functional language for building scalable applications.
\newblock \url{https://elixir-lang.org}, 2023.

\bibitem{gleam}
Gleam Team.
\newblock Gleam: A friendly language for building type-safe systems.
\newblock \url{https://gleam.run}, 2023.

\bibitem{hewitt1973actors}
Carl Hewitt, Peter Bishop, and Richard Steiger.
\newblock A Universal Modular ACTOR Formalism for Artificial Intelligence.
\newblock \textit{IJCAI '73}, 1973.

\bibitem{honda1993session}
Kohei Honda.
\newblock Types for dyadic interaction.
\newblock \textit{CONCUR '93}, 1993.

\bibitem{honda2008session}
Kohei Honda, Nobuko Yoshida, and Marco Carbone.
\newblock Multiparty asynchronous session types.
\newblock \textit{POPL '08}, 2008.

\bibitem{honda2016behavioral}
Kohei Honda, et al.
\newblock Behavioral types in programming languages.
\newblock \textit{Foundations and Trends in Programming Languages}, 2016.

\bibitem{lfe}
LFE Team.
\newblock LFE: Lisp Flavored Erlang.
\newblock \url{https://lfe.io}, 2023.

\bibitem{lindahl2006practical}
Tobias Lindahl and Konstantinos Sagonas.
\newblock Practical type inference based on success typings.
\newblock \textit{PPDP '06}, 2006.

\bibitem{matsakis2014rust}
Nicholas D. Matsakis and Felix S. Klock II.
\newblock The Rust language.
\newblock \textit{Ada Letters}, 34(3), 2014.

\bibitem{reppy1999concurrent}
John H. Reppy.
\newblock \textit{Concurrent Programming in ML}.
\newblock Cambridge University Press, 1999.

\bibitem{scott2000pattern}
Kevin Scott and Norman Ramsey.
\newblock When do match-compilation heuristics matter?
\newblock \textit{Technical Report}, University of Virginia, 2000.

\bibitem{siek2006gradual}
Jeremy G. Siek and Walid Taha.
\newblock Gradual typing for functional languages.
\newblock \textit{Scheme and Functional Programming Workshop}, 2006.

\bibitem{siek2015refined}
Jeremy G. Siek, et al.
\newblock Refined criteria for gradual typing.
\newblock \textit{SNAPL '15}, 2015.

\bibitem{strom1986typestate}
Robert E. Strom and Shaula Yemini.
\newblock Typestate: A programming language concept for enhancing software reliability.
\newblock \textit{IEEE TSE}, 1986.

\bibitem{tobin-hochstadt2008design}
Sam Tobin-Hochstadt and Matthias Felleisen.
\newblock The design and implementation of Typed Scheme.
\newblock \textit{POPL '08}, 2008.

\bibitem{typed-erlang}
Konstantinos Sagonas and Tobias Lindahl.
\newblock Gradual typing of Erlang programs: The eqWAlizer experience.
\newblock \textit{ICFP '22}, 2022.

\bibitem{typescript}
Microsoft.
\newblock TypeScript: JavaScript with syntax for types.
\newblock \url{https://www.typescriptlang.org}, 2023.

\bibitem{wadler1990linear}
Philip Wadler.
\newblock Linear types can change the world!
\newblock \textit{Programming Concepts and Methods}, 1990.

\bibitem{yoshida2013scribble}
Nobuko Yoshida, et al.
\newblock Scribble: A foundation for multiparty session type.
\newblock \textit{ICALP '13}, 2013.

\end{thebibliography}

\end{document}